\pgfplotsset{compat=1.14}
\newtheorem{theorem}{Theorem}
\newtheorem{proposition}{Proposition}
\newtheorem{lemma}{Lemma}
\DeclareSymbolFont{bbold}{U}{bbold}{m}{n}
\DeclareSymbolFontAlphabet{\mathbbold}{bbold}
\newcommand{\1}{\mathbbold{1}}
\newcommand\blfootnote[1]{%
  \begingroup
  \renewcommand\thefootnote{}\footnote{#1}%
  \addtocounter{footnote}{-1}%
  \endgroup
}
\newcommand{\cR}{\mathcal{R}}
\newcommand{\bolda}{\mathbf{a}}
\newcommand{\boldb}{\mathbf{b}}
\newcommand{\boldc}{\mathbf{c}}
\newcommand{\boldd}{\mathbf{d}}
\newcommand{\boldh}{\mathbf{h}}
\newcommand{\boldm}{\mathbf{m}}
\newcommand{\boldv}{\mathbf{v}}
\newcommand{\boldx}{\mathbf{x}}
\def\namedlabel#1#2{\begingroup
	\def\@currentlabel{#2}%
	\label{#1}\endgroup
}
\begin{document}
\title{Optimal $k$-Deletion Correcting Codes}
\author{\textbf{Jin Sima} \IEEEauthorblockN{and \textbf{Jehoshua Bruck}}\\
	\IEEEauthorblockA{
	Department of Electrical Engineering,
California Institute of Technology, Pasadena 91125, CA, USA\\}}

\maketitle

\begin{abstract}
Levenshtein introduced the problem of constructing $k$-deletion correcting codes  in 1966, proved that the optimal redundancy of those codes is $O(k\log N)$, and proposed an optimal redundancy single-deletion correcting code (using the so-called VT construction). However, the problem of constructing optimal redundancy $k$-deletion correcting codes remained open. Our key contribution is a solution to this longstanding open problem. We present a $k$-deletion correcting code that has redundancy $8k\log n +o(\log n)$ and encoding/decoding algorithms of complexity $O(n^{2k+1})$ for constant~$k$.
\end{abstract}
\blfootnote{This work was presented in part at the IEEE International Symposium on Information Theory, Paris, France, July 2019. }
\section{Introduction}
A set of binary vectors of length $N$ is a $k$-deletion code (denoted by~$\mathcal{C}$) iff any two vectors in ~$\mathcal{C}$ do not share a subsequence of length~$N-k$. The problem of constructing a $k$-deletion code was introduced by Levenshtein~\cite{levenshtein1966binary}. He proved that the optimal redundancy (defined as~$N- \log |\mathcal{C}|$) is~$O(k\log N)$. Specifically, it is in the range~$k\log N +o(\log N)$ to ~$2k\log N +o(\log N)$. In addition, he proposed the following optimal construction (the well-known Varshamov-Tenengolts (VT) code~\cite{vt1965}):
\begin{align}\label{equation:VTconstruction}
\left\{(c_1,\ldots,c_N): \sum^N_{i=1}ic_i\equiv 0 \bmod (N+1)\right\},
\end{align}
that is capable of correcting a single deletion with redundancy not more than~$\log (N+1)$~\cite{levenshtein1966binary}. The encoding/decoding complexity of VT codes is linear in $N$. Generalizing the VT construction to correct more than a single deletion was elusive for more than 50 years. In particular, the past approaches~\cite{helberg2002multiple},~\cite{abdel2012helberg},~\cite{paluncic2012multiple} result in asymptotic code rates that are bounded away from~$1$. 

A recent breakthrough paper~\cite{brakensiek2016efficient} proposed a~$k$-deletion code construction with~$O(k^2\log k \log N)$ redundancy and~$O_k(N\log^4 N)$\footnote{The notion~$O_k$ denotes \textit{parameterized complexity}, i.e., $O_k(N\log^4 N)=f(k)O(N\log^4 N)$ for some function~$f$.} encoding/decoding complexity. For the case~$k=2$ deletions, the redundancy was improved in~\cite{Ryan,sima2018two}. Specifically, the code in~\cite{sima2018two} has redundancy of $7\log N$ and linear encoding/decoding complexity. The work in~\cite{hanna2018guess} considered correction with high probability and proposed a~$k$-deletion code construction with redundancy~$(k+1)(2k+1)\log N + o(\log N)$ and encoding/decoding complexity~$O(N^{k+1}/\log^{k-1}N)$. The result for this randomized coding setting was improved in~\cite{Haeupler}, where redundancy~$O(k\log (n/k))$ and complexity~$poly(n,k)$ were achieved. However, finding a deterministic~$k$-deletion code construction that achieves the optimal order redundancy~$O(k\log N)$ remained elusive. 

Our key contribution is a solution to this longstanding open problem: We present a code construction that achieves~$O(k\log N)$ redundancy and~$O(N^{2k+1})$ encoding/decoding computational complexity (note that the complexity is polynomial in~$N$). 
The following theorem summarizes our main result. We note that in this paper, the optimality of the code is redundancy-wise rather than cardinality-wise, i.e., the result focus on asymptotic rather than exact size of the code. 
The problem of finding optimal cardinality~$k$ deletion code appears highly nontrivial even for~$k=1$.
\begin{theorem}\label{theorem:main}
    For~any integer~$n >k$ and~$N=n+8k\log n +o(\log n)$, there exists an encoding function~$\mathcal{E}:\{0,1\}^n\rightarrow \{0,1\}^{N}$, computed in~$O(n^{2k+1})$ time, and a decoding function~$\mathcal{D}:\{0,1\}^{N-k}\rightarrow \{0,1\}^{n}$, computed in~$O(n^{k+1})$ time, such that for any~$\boldc\in\{0,1\}^n$ and subsequence~$\boldd\in \{0,1\}^{N-k}$ of~$\mathcal{E}(\boldc)$, we have that~$\mathcal{D}(\boldd)=\boldc$. 
\end{theorem}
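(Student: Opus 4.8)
The plan is to construct the encoding $\mathcal{E}$ in a layered, hash-and-protect fashion, reminiscent of the single-deletion VT philosophy but iterated with a careful redundancy budget. The core idea is a syndrome/hash that pins down the positions and values of the $k$ deleted bits, together with enough local structure so that such a syndrome can itself be protected against the very same $k$ deletions. Concretely, I would first pre-process the data string $\boldc\in\{0,1\}^n$ so that it contains no long runs and no long periodic substrings (by inserting short markers or by a bounded substitution), since the hard cases for deletion correction are exactly highly repetitive regions; this costs only $o(\log n)$ redundancy. Call the resulting string the ``regular'' part.

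Next I would define an integer-valued hash (a weighted sum modulo a suitable prime, in the spirit of~(\ref{equation:VTconstruction}) but with higher moments / a $k$-fold generalization) whose value, together with a constant amount of side information, suffices to locate and repair $k$ deletions inside a regular string. The key quantitative point is that on regular strings this hash can be taken to have range roughly $2^{ck\log n}$ for an explicit constant $c$, so it contributes about $ck\log n$ bits. To hit the promised $8k\log n + o(\log n)$ I would need $c$ essentially $8$ (or split the budget across a few sub-hashes summing to $8k$), so the accounting has to be done tightly. I would then append this hash to the data — but the appended redundancy symbols are themselves vulnerable, so I protect them by encoding them with a short ($O(k\log\log n)$-length) inner $k$-deletion code, obtained for instance from the existence of \emph{some} (non-optimal-redundancy) $k$-deletion code on short blocks; since the block length is polylogarithmic, even a wasteful inner code adds only $o(\log n)$ bits. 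The decoder first recovers the protected hash via the inner code, then uses it to correct the $k$ deletions in the regular part, then inverts the regularization pre-processing.

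For the complexity claims: encoding requires computing the hash (linear) and, crucially, inverting the regularization map and searching over inner-code codewords, which I expect to dominate at $O(n^{2k+1})$ — the exponent $2k+1$ should come from a brute-force consistency check over the $\binom{N}{k}$-ish ways the deletions could have interacted with the structured blocks during encoding, combined with an $O(n^k)$ verification each. Decoding is cheaper, $O(n^{k+1})$, because once the hash is known the repair is a single pass with an $O(n^k)$ search for the deletion pattern consistent with the hash value. The main obstacle, and where I expect the real work to lie, is the simultaneous-constraint problem: the hash must be strong enough to correct $k$ deletions in the payload \emph{and} short enough to fit the $8k\log n$ budget \emph{and} structured enough that its own protection is cheap — these pull against each other, and threading them likely requires the regularization step to enforce fairly delicate combinatorial properties (bounded run length, bounded density of short repeats at every scale) rather than a single crude property. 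A secondary obstacle is making the pre-processing itself deletion-robust, i.e.\ ensuring that after $k$ deletions the decoder can still recognize and strip the markers; I would handle this by choosing markers that are themselves locally self-synchronizing and arguing that a regular string cannot accidentally reproduce them.
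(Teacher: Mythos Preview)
Your high-level architecture---regularize the payload, append a VT-style hash, then protect the hash with a short inner code---matches the paper's outer shell, but you are missing the central technical idea that makes the $8k\log n$ budget achievable.

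The gap is in your sentence ``define an integer-valued hash (a weighted sum modulo a suitable prime, in the spirit of~(\ref{equation:VTconstruction}) but with higher moments / a $k$-fold generalization) whose value \ldots\ suffices to locate and repair $k$ deletions inside a regular string.'' The paper's Proposition~\ref{lemma:vtextension} shows that the higher-moment checks $\boldc\cdot\boldm^{(e)}$, $e=0,\dots,6k$, correct $3k$ deletions \emph{only} on sequences in $\mathcal{R}_{3k}$, i.e.\ sequences whose consecutive $1$'s are at least $3k$ apart. No amount of run-length or periodicity regularization puts the data itself into $\mathcal{R}_{3k}$, so a single VT-style hash on the (regularized) data does not work. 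The paper's key move is indirection: it defines a derived binary vector $\1_{sync}(\boldc)$ (the indicator of ``synchronization pattern'' endpoints) which \emph{is} automatically in $\mathcal{R}_{3k}$, proves that $k$ deletions in $\boldc$ perturb $\1_{sync}(\boldc)$ only within $B_{3k}$ (Proposition~\ref{lemma:editdistance}), and applies the higher-moment VT hash to $\1_{sync}(\boldc)$ rather than to $\boldc$. This costs $4k\log n+o(\log n)$ after a modulus-compression trick (Lemma~\ref{lemma:protectingpattern}). Once $\1_{sync}(\boldc)$ is known, its $1$-positions delimit short blocks (this is where the $k$-dense transformation $T$ is needed, to guarantee the blocks are short), and a Reed--Solomon outer code over per-block hashes corrects the remaining errors for another $4k\log n+o(\log n)$ (Lemma~\ref{lemma:recoverforkdense}). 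The $8k=4k+4k$ split is not an accounting accident; it reflects this two-stage recovery.

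Your complexity attributions are also off. The $O(n^{2k+1})$ encoding bottleneck is not ``inverting the regularization map'' or inner-code search; it is the brute-force search for the compressing modulus $p(T(\boldc))$ in Lemma~\ref{lemma:protectingpattern}, which requires enumerating the deletion ball $B_k(T(\boldc))$ of size $O(n^{2k})$. Likewise, the $O(n^{k+1})$ decoding cost is the brute-force recovery of $\1_{sync}(T(\boldc))$ from its residue modulo $p$, not a generic ``search for the deletion pattern consistent with the hash value.'' Finally, your worry about markers surviving deletions is resolved in the paper not by making the pattern indestructible, but by allowing it to be destroyed or spuriously created and absorbing that damage into the $B_{3k}$ bound on $\1_{sync}$---the opposite of what you propose.
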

Recently, an independent work~\cite{kuan2018deterministic} proposed a~$k$ deletion code with~$O(k\log n)$ redundancy and better complexity of~$poly(n,k)$. Compare to the constant~$8k\log n$ in this paper, the constant in~\cite{kuan2018deterministic} is not explicitly given and is at least~$200k\log n$.
Moreover, the approaches in~\cite{kuan2018deterministic} and this paper are different. 

Next we identify and describe our key ideas. The key building blocks in our code construction are: (i) \textit{generalizing the  VT construction} to $k$ deletions by considering constrained sequences, (ii) separating the encoded vector to blocks and using \textit{concatenated codes} and (iii) a novel strategy to \textit{separate the vector to blocks by a single pattern.}

In our previous work for 2-deletions codes~\cite{sima2018two}, we \textit{generalized the VT construction}. In particular, we proved that while the higher order parity checks~$\sum^n_{i=1}i^jc_i  \bmod( n^j+1)$,~$j=0,1,\ldots,t$ might not work in general, those parity checks work in the two deletions case when  the sequences are constrained to have no adjacent~$1$'s. 
In this paper we generalize this idea, specifically, the higher order parity checks work for $k=t/2$ deletions when the sequences we need to protect satisfy the \textit{following constraint: } The distance between any two adjacent ~$1$'s is at least~$k$. 

The fact that we can correct $k$ deletions using the generalization of the VT construction on constrained sequences, enables a \textit{concatenated code construction}, which separates the sequence~$\boldc$ into small blocks. Each block is protected by an inner code, usually a~$k$-deletion code. All the blocks together are protected by an outer code, for example, a Reed-Solomon code. Separating and identifying the boundaries between blocks is one of the main challenges in the concatenated code construction. The work in~\cite{schulman1999asymptotically,guruswami2017deletion} resolved this issue by inserting markers between blocks.
In \cite{brakensiek2016efficient}, an approach that uses occurrences of short subsequences, called patterns, as markers was proposed. The success of decoding in existing approaches requires that the patterns can not be destroyed or generated by~$k$ deletions / insertions.  

Here, we improve the redundancy in~\cite{brakensiek2016efficient} by using \textit{a single pattern to separate the blocks} and allowing it to be destroyed or generated by deletions / insertions.  
The pattern, which we call~
\emph{synchronization pattern}, is a length~$3k+\lceil \log k\rceil +4$ sequence~$\bolda=(a_1,\ldots,a_{3k+\lceil \log k\rceil +4})$ satisfying
\begin{itemize}
    \item $a_{3k+i}=1$ for~$i\in[0,\lceil \log k\rceil +4]$, where~$[0,\lceil \log k\rceil +4]=\{0,\ldots, \lceil \log k\rceil +4\}$.
    \item There does not exist a~$j\in [1,3k-1]$, such that $a_{j+i}=1$ for~$i\in [0, \lceil \log k\rceil +4]$.
\end{itemize}
Namely, 
a \emph{synchronization pattern} is a sequence that ends with~$\lceil \log k\rceil +5$ consecutive~$1$'s and no other 1 run with length~$\lceil \log k\rceil +5$ exists.
For a sequence~$\boldc=(c_1,\ldots,c_n)$, define a \emph{synchronization vector}~$\1_{sync}(\boldc)\in\{0,1\}^n$ by
\begin{align*}
\1_{sync}(\boldc)_i &= \begin{cases}
1, &\text{if~$(c_{i-3k+1},c_{i-3k+2},\ldots,c_{i+\lceil \log k\rceil +4})$ is a \emph{synchronization pattern},}\\
0,&\text{else.}\\
\end{cases}
\end{align*}
Note that~$\1_{sync}(\boldc)_i=0$ for~$i\in [1,3k-1]$ and for~$i\in [n-\lceil \log k\rceil -3,n]$. 
It can be seen from the definition that any two consecutive~$1$ entries in~$\1_{sync}(\boldc)$ have distance at least~$3k$.

Now we are ready to describe our construction that is a generalization of the VT code. Define the integer vectors
\begin{align*}
    \boldm^{(\ell)}\triangleq (1^\ell,1^\ell+2^\ell,\ldots,\sum^n_{j=1}j^\ell)
\end{align*}
for~$\ell\in [0,\ldots,6k]$, where the~$i$-th entry of~$\boldm^{(\ell)}$ is the sum of the~$\ell$-th powers of the first~$i$ entries. 
Given a sequence~$\boldc\in \{0,1\}^n$ we compute a (VT like)  redundancy of dimension~$6k+1$ as follows:
\begin{align}\label{equation:deff}
    &f(\boldc)_{\ell}\triangleq \boldc \cdot \boldm^{(\ell)} \bmod 3kn^{\ell+1},
\end{align}
for~$\ell\in [0,6k]$.
It will be shown that
the vector~$f(\1_{sync}(\boldc))$ helps protect the \emph{synchronization vector}~$\1_{sync}(\boldc)$ from~$k$ deletions in~$\boldc$. 

The rest of the paper is organized as follows. Section~\ref{section:outline} provides an outline of our construction and some of the basic lemmas. Section~\ref{section:synchronizationvectors} presents our VT generalization for recovering the \emph{synchronization vector}. Section~\ref{section:kdensehash} explains how to correct~$k$ deletions based on the \emph{synchronization vector}, when the \emph{synchronization patterns} appear frequently. Section~\ref{section:transformation} describes an algorithm to transform a sequence into one with dense \emph{synchronization patterns}. Section~\ref{section:encoding} presents the encoding and decoding of the code. Section~\ref{section:conclusion} concludes the paper. 

\section{Outline and Preliminaries}\label{section:outline}
In this section we give an overview of the ingredients that constitute our code construction and the existing results that will be used. For a sequence~$\boldc\in\{0,1\}^n$, define its deletion ball $B_k(\boldc)$ as the collection of sequences that share a length~$n-k$ subsequence with~$\boldc$. 
We first present a lemma showing that
the~\emph{synchronization vector}~$\1_{sync}(\boldc)$ can be recovered from~$k$ deletions with the help of 
~$f(\1_{sync}(\boldc))$. Its proof will be given in Section~\ref{subsection:proofoflemma1}.
\begin{lemma}\label{lemma:syncvector}
For integers~$n$ and~$k$ and sequences~$\boldc,\boldc'$, if $\boldc'\in B_k(\boldc)$~and~$f(\1_{sync}(\boldc))=f(\1_{sync}(\boldc'))$, then~$\1_{sync}(\boldc)=\1_{sync}(\boldc')$.
\end{lemma}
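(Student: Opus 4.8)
The plan is to show that if two sequences $\boldc,\boldc'$ that agree modulo the synchronization vector after $k$ deletions also agree in their VT-like redundancy $f(\cdot)$, then their synchronization vectors must in fact be identical. First I would observe that since $\boldc' \in B_k(\boldc)$, the two synchronization vectors $\boldx \triangleq \1_{sync}(\boldc)$ and $\boldx' \triangleq \1_{sync}(\boldc')$ are themselves ``close'' in a related sense: a deletion of $k$ symbols from $\boldc$ changes the synchronization pattern structure only locally, so $\boldx$ and $\boldx'$ should differ in only a bounded number of coordinates, and more importantly $\boldx'$ should be obtainable from $\boldx$ by a bounded number of deletions/insertions of $1$'s together with the fact that consecutive $1$'s in either vector are spaced at least $3k$ apart. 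The crucial structural input is the spacing property already noted in the excerpt: any two consecutive $1$-entries of a synchronization vector have distance at least $3k$. This rigidity is what will let the higher-order parity checks pin down the positions of the $1$'s.

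Next I would set up a counting/ moment argument in the style of the classical VT proof. Suppose for contradiction that $\boldx \neq \boldx'$. Because both arise from sequences related by $k$ deletions, the ``error'' between $\boldx$ and $\boldx'$ can be described by at most $k$ deleted $1$'s from $\boldx$ and at most $k$ inserted $1$'s to form $\boldx'$ (deletions of $0$'s and shifts are absorbed into how we compare index positions). The condition $f(\boldx) = f(\boldx')$ gives, for each $\ell \in [0,6k]$, that $\sum_i (x_i - x'_i)$-type weighted sums of $\ell$-th power prefix weights vanish modulo $3kn^{\ell+1}$. Writing the discrepancy in terms of the positions $p_1,\dots,p_s$ of the at most $k$ ``lost'' $1$'s and $q_1,\dots,q_t$ of the at most $k$ ``gained'' $1$'s (with $s,t \le k$), the equalities $f(\boldx)_\ell = f(\boldx')_\ell$ translate into a system saying that $\sum_a P(p_a) - \sum_b P(q_b) \equiv 0 \bmod 3kn^{\ell+1}$ where $P$ ranges over a family of polynomials spanning all degrees up to $6k$; since each prefix-sum $\sum_{j\le i} j^\ell$ is a polynomial of degree $\ell+1$ in $i$ with bounded coefficients, and since $2k$ points with multiplicities are involved, the magnitude of these sums is $O(k n^{\ell+1})$ — but I need the modulus $3kn^{\ell+1}$ to be large enough to force the integer (not just residue) identity $\sum_a P(p_a) = \sum_b P(q_b)$ for all these polynomials. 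That is where the factor $3k$ and the careful bookkeeping of how many $1$'s can appear in a window (at most one per $3k$ positions) enters.

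Once the integer identities hold for $6k+1$ independent power sums, I would invoke a ``generalized power-sum / Newton's identity'' argument: the multiset $\{p_a\}$ (with signs/multiplicities) and the multiset $\{q_b\}$ have equal $\ell$-th power sums for $\ell = 0,1,\dots,6k \ge 2(s+t)$, which forces the multisets to be equal. Combined with the spacing constraint (distance $\ge 3k$ between consecutive $1$'s in both $\boldx$ and $\boldx'$), equal multisets of lost-and-gained positions forces $\boldx = \boldx'$, the desired contradiction. The main obstacle I expect is the precise accounting in the middle step: correctly arguing that the difference of synchronization vectors really is captured by at most $k$ positional ``moves'' of $1$'s — because a single deletion in $\boldc$ can in principle create or destroy a synchronization pattern at a location somewhat away from the deletion, and one must use the $\lceil\log k\rceil + 5$ run-length threshold in the definition of the pattern to bound how far this influence propagates and hence to keep the number of affected $1$-entries of $\1_{sync}(\boldc)$ bounded by a constant times $k$, so that the chosen modulus $3kn^{\ell+1}$ and the chosen number of parity checks $6k+1$ suffice. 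Getting the constants to line up (why $6k$ powers, why the multiplier $3k$) is the delicate part; the rest is the standard VT-style moment/Newton-identity machinery.
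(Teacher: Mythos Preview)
Your overall architecture is right (show the sync vectors are ``close,'' lift the congruences to integer equalities via a magnitude bound, then use an algebraic rigidity argument), but the core combinatorial model is wrong, and the Newton-identity step does not go through as stated.

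The gap is the claim that the discrepancy between $\boldx=\1_{sync}(\boldc)$ and $\boldx'=\1_{sync}(\boldc')$ is captured by at most $k$ ``lost'' $1$'s and at most $k$ ``gained'' $1$'s. This is false: a single deletion in $\boldc$ can shift \emph{every} surviving synchronization pattern by one position, so the symmetric difference $\Delta\triangle\Delta'$ of the $1$-supports can have size $\Theta(n/k)$, not $O(k)$. (Concretely, take $\boldc'$ to be $\boldc$ with the first bit deleted and a bit appended; all pattern positions slide by one.) Consequently your magnitude estimate ``$O(kn^{\ell+1})$ because $2k$ points are involved'' fails, and even if the integer equalities held, $6k+1$ power sums cannot pin down $\Theta(n/k)$ unknown positions via Newton's identities. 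The parenthetical ``shifts are absorbed into how we compare index positions'' cannot be made to work: $f$ is computed at the actual coordinates of the length-$n$ vectors, and there is no alignment freedom.

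What the paper does instead is to track not the $1$-positions but the cumulative count difference $s_i \triangleq |\Delta\cap[i,n]|-|\Delta'\cap[i,n]|$, and to show two things: (i) $|s_i|\le 3k$ for all $i$ (this alone gives the magnitude bound $<3kn^{\ell+1}$ and hence the integer equalities), and (ii) the \emph{sign} of $s_i$ is constant on each of at most $6k+1$ intervals, the breakpoints being the $6k$ deletion positions coming from $\boldx'\in B_{3k}(\boldx)$. Part~(ii) is where the $3k$-spacing of $1$'s is actually used, and it is the nonobvious step. Once the sign is piecewise constant, the system $\sum_i s_i i^\ell=0$ for $\ell=0,\dots,6k$ becomes $A\boldx=0$ with $\boldx\in\{\pm1\}^{6k+1}$ and $A$ a matrix whose nonzero-column minors are positive by a Vandermonde expansion; this forces $s_i\equiv 0$, hence $\Delta=\Delta'$. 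So the right invariant to bound by $O(k)$ is the number of sign changes of $s_i$, not the number of displaced $1$'s.
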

Lemma~\ref{lemma:syncvector} implies that~$\1_{sync}(\boldc)$ can be protected using~$O(k^2\log n)$ bit redundancy~$f(\1_{sync}(\boldc))$. To further reduce the redundancy and get it down to~$O(k\log n)$, we apply modulo operations on~$f(\1_{sync}(\boldc))$.
For an integer vector~$\boldv=(v_0,\ldots,v_{6k})$ that satisfies~$0\le v_e< 3kn^{e+1}$,~$e\in [0,6k]$, 
let
\begin{align}\label{equation:Mfunction}
    M(\boldv)= \sum^{6k}_{e=0}v_e\prod^{e-1}_{i=0}3kn^{i+1}
\end{align}
be a one-to-one mapping that maps the vector~$\boldv$ into an integer~$M(\boldv)\in[0,(3k)^{6k+1}n^{(3k+1)(6k+1)}-1]$.
Then we have the following lemma, which will be proved in Section~\ref{subsection:proofoflemma2}. 
\begin{lemma}\label{lemma:protectingpattern}
For integers~$n$ and~$k$, there exists a function~$p:\{0,1\}^{n}\rightarrow [1,2^{2k\log n +o(\log n)}]$, such that
if~$M(f(\1_{sync}(\boldc)))\equiv M(f(\1_{sync}(\boldc')))\bmod p(\boldc)$ for two sequences~$\boldc\in\{0,1\}^n$ and~$\boldc'\in B_k(\boldc)$, then~$\1_{sync}(\boldc)=\1_{sync}(\boldc')$.
Hence if
\begin{align*}
(M(f(\1_{sync}(\boldc))) \bmod p(\boldc),p(\boldc))   
=(M(f(\1_{sync}(\boldc'))) \bmod p(\boldc'),p(\boldc'))
\end{align*}
and~$\boldc'\in B_k(\boldc)$, we have that~$\1_{sync}(\boldc)=\1_{sync}(\boldc')$.
\end{lemma}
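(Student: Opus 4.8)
The plan is to reduce the problem to Lemma~\ref{lemma:syncvector} by a standard prime-modulus (``Varshamov-Tenengolts hashing'') argument. Lemma~\ref{lemma:syncvector} tells us that if $\boldc'\in B_k(\boldc)$ and $f(\1_{sync}(\boldc))=f(\1_{sync}(\boldc'))$, then $\1_{sync}(\boldc)=\1_{sync}(\boldc')$; equivalently, via the injectivity of $M$, if $M(f(\1_{sync}(\boldc)))=M(f(\1_{sync}(\boldc')))$ then the synchronization vectors agree. So it suffices to design $p(\boldc)$ so that agreement of $M(f(\1_{sync}(\boldc)))$ and $M(f(\1_{sync}(\boldc')))$ modulo $p(\boldc)$, for $\boldc'\in B_k(\boldc)$, already forces $M(f(\1_{sync}(\boldc)))=M(f(\1_{sync}(\boldc')))$ as integers --- i.e.\ $p(\boldc)$ must not divide the nonzero difference $|M(f(\1_{sync}(\boldc)))-M(f(\1_{sync}(\boldc')))|$ for any $\boldc'\in B_k(\boldc)$ with $\1_{sync}(\boldc)\ne\1_{sync}(\boldc')$.

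The construction of $p$ is the usual counting argument. First I would fix $\boldc$ and consider the product
\[
P(\boldc)=\prod_{\substack{\boldc'\in B_k(\boldc)\\ M(f(\1_{sync}(\boldc)))\ne M(f(\1_{sync}(\boldc')))}} \bigl|M(f(\1_{sync}(\boldc)))-M(f(\1_{sync}(\boldc')))\bigr|.
\]
Since $M(f(\cdot))$ takes values in $[0,(3k)^{6k+1}n^{(3k+1)(6k+1)}-1]$, each factor is at most $(3k)^{6k+1}n^{(3k+1)(6k+1)}$, hence $\log$ of each factor is $O(k^2\log n)$ (treating $k$ as constant, this is $O(\log n)$). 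The deletion ball $B_k(\boldc)$ has size at most $n^{O(k)}$ (more precisely, the number of distinct sequences reachable by $k$ deletions followed by $k$ insertions is polynomial in $n$ of degree $O(k)$), so $\log P(\boldc)=O(k^2\log n)$ as well, which is $O(\log n)$ for constant $k$ --- in fact the bookkeeping gives $\log P(\boldc)\le 2k\log n + o(\log n)$ after being careful with the constants (this is where the constant ``$2k$'' in the statement comes from, and matching it exactly will require pinning down $|B_k(\boldc)|$ and the bit-length of $M(f(\cdot))$ precisely). Then $P(\boldc)$ has fewer than $\log P(\boldc)$ distinct prime factors, so by a Bertrand-type/prime-counting bound there is a prime in the range $[1,2^{2k\log n+o(\log n)}]$ that does \emph{not} divide $P(\boldc)$; choose $p(\boldc)$ to be, say, the smallest such prime. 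With this choice, if $M(f(\1_{sync}(\boldc)))\equiv M(f(\1_{sync}(\boldc')))\bmod p(\boldc)$ for some $\boldc'\in B_k(\boldc)$, then $p(\boldc)$ divides the difference; since $p(\boldc)\nmid P(\boldc)$ the difference cannot be one of the nonzero factors, so it must be zero, i.e.\ $M(f(\1_{sync}(\boldc)))=M(f(\1_{sync}(\boldc')))$, and Lemma~\ref{lemma:syncvector} gives $\1_{sync}(\boldc)=\1_{sync}(\boldc')$.

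For the final ``Hence'' clause: suppose $(M(f(\1_{sync}(\boldc)))\bmod p(\boldc),p(\boldc))=(M(f(\1_{sync}(\boldc')))\bmod p(\boldc'),p(\boldc'))$ and $\boldc'\in B_k(\boldc)$. The second coordinates give $p(\boldc)=p(\boldc')$, and then the first coordinates give $M(f(\1_{sync}(\boldc)))\equiv M(f(\1_{sync}(\boldc')))\bmod p(\boldc)$, so the first part applies verbatim and $\1_{sync}(\boldc)=\1_{sync}(\boldc')$.

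The main obstacle I anticipate is not the logical structure --- which is routine hashing --- but the quantitative accounting needed to get the modulus into the claimed range $[1,2^{2k\log n+o(\log n)}]$ with exactly the constant $2k$: one must (i) bound $|B_k(\boldc)|$ sharply (its logarithm should contribute only to the $o(\log n)$ term, not to the leading $2k\log n$), (ii) bound the bit-length of $M(f(\1_{sync}(\boldc)))$, noting that although $M$ ranges over an interval of size roughly $n^{\Theta(k^2)}$, the relevant differences coming from sequences with distinct synchronization vectors are controlled by the $O(k)$ many low-order parity checks that actually change, so the effective contribution is $2k\log n$ rather than $\Theta(k^2\log n)$, and (iii) invoke a prime-existence bound (Bertrand's postulate iterated, or a bound of the form: among the first $t$ primes the product exceeds any given $N$ once $t=O(\log N/\log\log N)$) to locate $p(\boldc)$ in the stated range. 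Everything else is bookkeeping.
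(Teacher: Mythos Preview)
Your high-level reduction is exactly right and matches the paper: Lemma~\ref{lemma:syncvector} plus injectivity of $M$ means it suffices to choose $p(\boldc)$ not dividing any nonzero difference $|M(f(\1_{sync}(\boldc)))-M(f(\1_{sync}(\boldc')))|$ over $\boldc'\in B_k(\boldc)$, and the ``Hence'' clause is immediate once this is in place.

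However, your quantitative accounting contains a genuine error. You write that ``$\log P(\boldc)=O(k^2\log n)$'' and even ``$\log P(\boldc)\le 2k\log n + o(\log n)$''. This is false: $P(\boldc)$ is a product of up to $|B_k(\boldc)|\le 2n^{2k}$ factors, each of size up to $(3k)^{6k+1}n^{(3k+1)(6k+1)}$, so $\log_2 P(\boldc)$ is of order $n^{2k}\cdot k^2\log n$, \emph{polynomial} in $n$, not logarithmic. Your speculation in obstacle~(ii) that ``only the $O(k)$ low-order parity checks change'' so the differences are really of size $n^{O(k)}$ is not correct and is not how the constant $2k$ arises. Your prime-based argument is in fact salvageable with the correct bound on $\log P(\boldc)$ (the number of bad primes is then $O(n^{2k}k^2\log n)$, and the $(m{+}1)$-st prime is $\sim m\ln m$, landing you at $n^{2k}\cdot\mathrm{polylog}(n)=2^{2k\log n+o(\log n)}$), but the argument as written does not go through.

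The paper takes a slightly different and cleaner route that makes the constant $2k$ transparent: instead of counting prime factors of the product, it counts \emph{all} divisors of each individual difference using the divisor bound (Lemma~\ref{lemma:numberofdivisors}), $d(N)\le 2^{1.6\ln N/\ln\ln N}$. Since each difference is at most $(3k)^{6k+1}n^{(3k+1)(6k+1)}$, each has only $2^{o(\log n)}$ divisors; taking the union over $|B_k(\boldc)|\le 2n^{2k}$ candidates gives a ``bad set'' $\mathcal{P}(\boldc)$ of size at most $2^{2k\log n+o(\log n)}$, so some integer (not necessarily prime) in $[1,2^{2k\log n+o(\log n)}]$ avoids it. The $2k$ comes solely from $|B_k(\boldc)|$; the size of $M$ contributes only to the $o(\log n)$ term via the divisor bound.
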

Lemma~\ref{lemma:protectingpattern} presents a~$4k\log n +o(\log n)$ bit hash for correcting~$\1_{sync}(\boldc)$. With the knowledge of the~\emph{synchronization vector}~$\1_{sync}(\boldc)$,
the next lemma shows that the sequence~$\boldc$ can be further recovered using another~$4k\log +o(\log n)$ bit hash, when~$\boldc$ satisfies a "\emph{$k$ dense}" property. The proof of Lemma~\ref{lemma:recoverforkdense} will be given in Section~\ref{section:kdensehash}.

A sequence~$\boldc\in\{0,1\}^n$ is said to be~\emph{$k$ dense} if the lengths of the~$0$ runs in~$\1_{sync}(\boldc)$ is at most~
\begin{align*}
L\triangleq (\lceil \log k\rceil+5)2^{\lceil \log k\rceil+9}\lceil \log n \rceil
+ (3k+\lceil \log k\rceil+4)(\lceil \log n\rceil +9+\lceil\log k\rceil).
\end{align*}
For~$k$ \emph{dense}~$\boldc$, the distance between two consecutive~$1$ entries in~$\1_{sync}(\boldc)$ is at most~$L+1$.
\begin{lemma}\label{lemma:recoverforkdense}
For integers~$k$ and~$n>k$, there exists a function~$Hash_k:\{0,1\}^n\rightarrow \{0,1\}^{4k\log n+o(\log n)}$, such that every~\emph{$k$ dense} sequence~$\boldc\in\{0,1\}^n$ can be recovered, given its \emph{synchronization vector}~$\1_{sync}(\boldc)$,  its length~$n-k$ subsequence~$\boldd$,  and~$Hash_k(\boldc)$.
\end{lemma}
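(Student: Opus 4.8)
The plan is to build $Hash_k$ as a concatenated code whose skeleton is the block decomposition supplied by the (now known) synchronization vector $\1_{sync}(\boldc)$. Let $t_1<t_2<\dots$ be the support of $\1_{sync}(\boldc)$; since consecutive entries are at distance in $[3k,L+1]$ (the upper bound $L+1$ being exactly the $k$ denseness hypothesis), I would cut $\boldc$ into consecutive blocks $\boldc^{(1)},\dots,\boldc^{(m)}$ whose cut points are a suitably spaced subset of the $t_i$, so that every block has length $O_k(\log n)$ (at most $2L$, say) and $\ge 3k$, and $m=O(n/\log n)$. Each cut point sits at a \emph{synchronization pattern}, hence is marked by a known run of $\lceil\log k\rceil+5$ ones; the three facts I will use are that (i) the block lengths $\ell_1,\dots,\ell_m$ and all these landmark runs are determined by $\1_{sync}(\boldc)$ alone; (ii) the $\le k$ deletions carrying $\boldc$ to $\boldd$ spoil at most $2k$ of the blocks, since a deletion interior to a block spoils only that block while a deletion touching a landmark makes at most that one boundary (hence its two adjacent blocks) ambiguous; and (iii) because blocks are $\Omega(k)$-separated, no deletion pattern can fuse two landmarks belonging to different blocks.

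For the hash I would first fix a \emph{short inner hash} $g\colon\{0,1\}^{\le 2L}\to\{0,1\}^{o(\log n)}$ with the property that if $\boldx,\boldy$ have equal length and share a common subsequence of length at least $|\boldx|-k$ and $g(\boldx)=g(\boldy)$, then $\boldx=\boldy$. Such a $g$ exists with $O(k\log L)=O(k\log\log n)=o(\log n)$ output bits, obtained as a proper colouring of the ``confusability'' graph on length-$\le 2L$ strings, whose maximum degree is $(2L)^{O(k)}$, so that $(2L)^{O(k)}+1$ colours suffice; this colouring is computable in time polynomial in $n$ for fixed $k$. I would then let $Hash_k(\boldc)$ consist of the $2k$ parity-check symbols of a Reed--Solomon code over $\bF_q$, with $q$ a power of two in $[n^{2},2n^{2})$, applied to the word $\big(g(\boldc^{(1)}),\dots,g(\boldc^{(m)})\big)\in\bF_q^{m}$, together with an $O(1)$-symbol record of boundary data (the total number of deletions and the inner hashes of the first and last $O(1)$ blocks). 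The stored length is $2k\log q+O(\log n)=4k\log n+o(\log n)$, and the point making it this short is that the inner hash \emph{values} are never stored — only the $2k$ outer check symbols are — which costs only $O(k\log n)$ precisely because $k$ denseness caps every block length at $O_k(\log n)$.

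Decoding would then run as follows. From $\1_{sync}(\boldc)$ read off $\ell_1,\dots,\ell_m$ and the landmark positions. Scan $\boldd$ for landmark runs, align them monotonically against those of $\boldc$ — this matching is forced away from the $\le k$ deletions — and output the set $S\subseteq[m]$ of blocks that cannot be copied verbatim out of $\boldd$; by fact (ii), $|S|\le 2k$, and for each $i\notin S$ this scan simultaneously yields $\boldc^{(i)}$ and hence $g(\boldc^{(i)})$. Now treat the entries $\{g(\boldc^{(i)}):i\in S\}$ as $\le 2k$ erasures in the Reed--Solomon codeword and decode, recovering all $m$ inner hashes. Finally, for each $i\in S$, carve from $\boldd$ (using the located boundaries) a subsequence $\boldd^{(i)}$ of $\boldc^{(i)}$; since $\boldd^{(i)}$ is a common subsequence of length $\ge\ell_i-k$ of every length-$\ell_i$ supersequence of $\boldd^{(i)}$, those supersequences are pairwise confusable in the sense above, so exactly one of them has inner hash equal to $g(\boldc^{(i)})$, namely $\boldc^{(i)}$ itself. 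Concatenating $\boldc^{(1)},\dots,\boldc^{(m)}$ recovers $\boldc$.

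I expect the crux to be the alignment step together with the bound $|S|\le 2k$: one must show that, with $\1_{sync}(\boldc)$ known exactly, every one of the $\le k$ deletions can be pinned to within one or two blocks of $\boldd$ even when it lands inside a landmark run, erodes one below the nominal length $\lceil\log k\rceil+5$, or conspires with neighbouring bits to spawn a spurious run in $\boldd$ — this robustness of the landmarks is what lets the outer code get away with only $2k$ check symbols and hence keeps the redundancy at $4k\log n+o(\log n)$. By contrast, the colouring bound for $g$, the Reed--Solomon parameters, and the $o(\log n)$ accounting for the boundary record are routine.
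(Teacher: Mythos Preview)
Your approach is essentially the paper's: cut $\boldc$ at the $1$-positions of $\1_{sync}(\boldc)$, apply a short inner $k$-deletion hash to each block (the paper's Lemma~\ref{lemma:hash} plays the role of your colouring $g$), and protect the vector of block-hashes with an outer Reed--Solomon code; at decoding, recover the bad blocks' hashes first, then each block from a guaranteed length-$(\ell_j-k)$ subsequence sitting inside $\boldd$. The one substantive difference is that the paper corrects $2k$ \emph{substitution errors} (hence $4k$ RS check symbols over a field of size $\approx n$) whereas you decode $2k$ \emph{erasures} ($2k$ check symbols over $\bF_q$ with $q\approx n^2$)---both come to $4k\log n$ bits.

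The paper's choice buys exactly the simplification you flag as the crux: it never has to certify which blocks are wrong. For every $j$ it simply picks \emph{any} sync marker of $\boldd$ in the window $[t_j-k,t_j]$, extracts a candidate $\bolda'_j$, and argues that at most $2k$ of these candidates disagree with $\bolda_j$ (one deletion spoils at most two ``good intervals''); the RS decoder then repairs those as errors. Your erasure route needs the stronger guarantee that every block declared good really is good---if a single deletion simultaneously destroys a landmark and spawns a spurious one at a plausible offset (this is case~(2) in the paper's Proposition~\ref{lemma:editdistance}) and the resulting segment happens to have the correct length, a wrong block could slip through, and $2k$ erasure symbols leave no slack for even one such miss. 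This is the case analysis you anticipate; the paper simply sidesteps it by paying for errors instead of erasures.

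One bookkeeping point: your ``$O(1)$-symbol'' boundary record over $\bF_{n^2}$ is $\Theta(\log n)$ bits, not $o(\log n)$, so as written your hash has length $(4k+O(1))\log n$ rather than $4k\log n+o(\log n)$. The paper needs no such record (it just sets $t_0=0$, $t_{J+1}=n+1$), and you can drop it too.
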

Finally, to encode for arbitrary sequence~$\boldc\in\{0,1\}^n$, a mapping that transforms any sequence to a~\emph{$k$ dense} sequence is given in the following lemma. The details will be given in Section~\ref{subsection:Tfunction}.
\begin{lemma}\label{lemma:transformation}
For integers~$k$ and~$n>k$, there exists a map
$T:\{0,1\}^n\rightarrow\{0,1\}^{n+3k+3\lceil\log k\rceil+15}$, computable in~$poly(n,k)$ time, such that~$T(\boldc)$ is a~\emph{$k$ dense} sequence for~$\boldc\in\{0,1\}^n$. Moreover, the sequence~$\boldc$ can be recovered from~$T(\boldc)$. 
\end{lemma}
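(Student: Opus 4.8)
The plan is to build $T$ by iterating a single length-preserving, invertible \emph{surgery} that removes one overly long pattern-free interval at a time, paying for the \emph{synchronization patterns} it inserts with the bits saved by compressing that interval. Write $m\triangleq\lceil\log k\rceil+5$ for the length of the run of $1$'s in a \emph{synchronization pattern}, and let $a$ be the number of bits needed to write one synchronization pattern, one location pointer, and one flag bit, i.e.\ $a=(3k+m-1)+\lceil\log(n+3k+3\lceil\log k\rceil+15)\rceil+1$. The ingredient I would establish first, as a separate lemma of Section~\ref{section:transformation}, is the counting bound: \emph{the number of length-$\ell$ binary strings that contain no synchronization pattern as a substring is at most $2^{\ell-a}$ for every $\ell\ge L-O(k)$.} Heuristically, a uniformly random string contains a synchronization pattern at a given position with probability $\Theta(1/\mathrm{poly}(k))$ --- it needs a fresh maximal run of exactly $m$ ones whose preceding $3k$ symbols themselves contain a run of $m$ ones --- so the exponential growth rate of the pattern-free strings is $2^{1-\Theta(1/\mathrm{poly}(k))}$, and the (otherwise opaque) formula for $L$, with its factor $2^{\lceil\log k\rceil+9}$ and its $\lceil\log n\rceil$-terms, is exactly a length past which $\ell\cdot\Theta(1/\mathrm{poly}(k))$ exceeds $a$ plus the $O(k)$ that absorbs the logarithm of the transfer-matrix dimension. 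A ranking scheme then provides an efficiently invertible injection $\psi$ from pattern-free strings of length $\ell$ into $\{0,1\}^{\ell-a}$, for all such $\ell$.

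Given $\boldc\in\{0,1\}^n$, set $\boldc^{(0)}=\boldc\,\boldu$, where $\boldu\in\{0,1\}^{3k+3\lceil\log k\rceil+15}$ is a fixed block that pins down the boundary behaviour of the synchronization vector and carries a one-bit ``halt'' flag; thus $|\boldc^{(0)}|$ is the claimed length. Now iterate: if $\1_{sync}(\boldc^{(i)})$ has no $0$-run longer than $L$, stop and output $T(\boldc)=\boldc^{(i)}$; otherwise let $[p,q]$ be the leftmost $0$-run of $\1_{sync}(\boldc^{(i)})$ with $q-p>L$. Then $W=(c^{(i)}_{p-3k+1},\ldots,c^{(i)}_q)$ has length $q-p+3k>L$ and is pattern-free, since a synchronization pattern inside $W$ would force a $1$ of $\1_{sync}$ at some position in $[p,q]$. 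Keeping a buffer of $3k+m-1$ unchanged symbols at each end of $W$ --- which ensures $\1_{sync}$ is altered only inside $[p,q]$ and that no synchronization pattern whose defining window straddles an end of $W$ is destroyed --- replace the remaining core $W'$ (still of length $\ge L-O(k)$) by
\[
\boldg\;\psi(W')\;\mathrm{bin}(p)\;1,
\]
with $\boldg$ a fixed synchronization pattern and $\mathrm{bin}(p)$ the location pointer written in $\lceil\log|\boldc^{(0)}|\rceil$ bits; the choice of $a$ makes this replacement exactly as long as $W'$, so the string length is preserved, while the copy of $\boldg$ puts a new $1$ of $\1_{sync}$ into the former $0$-run $[p,q]$. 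The surgery also records in $\boldu$'s flag that at least one surgery has occurred.

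Termination is by a potential argument: each surgery is length-preserving and changes $\1_{sync}$ only on $[p,q]$, so it destroys no synchronization pattern and creates at least one; hence the number of $1$'s of $\1_{sync}(\boldc^{(i)})$ strictly increases in $i$, and since those $1$'s are pairwise $\ge 3k$ apart it stays $O(n/k)$, so there are $O(n/k)$ surgeries, each running in polynomial time (a scan to find the bad run, plus a transfer-matrix rank computation for $\psi$). When the loop halts, $T(\boldc)$ is $k$ dense by definition. Because we always attack the leftmost bad run and $\boldg$ sits at the very start of the replaced core, the leftmost bad run of $\boldc^{(i+1)}$ lies at a strictly larger position than $p$, so the surgeries are performed at strictly increasing positions. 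To recover $\boldc$ from $T(\boldc)$ one reverses this sequence: if $\boldu$'s flag is $0$, simply strip $\boldu$; otherwise repeatedly locate the most recent surgery site from its self-delimiting trailer $\mathrm{bin}(\cdot)\,1$ (read in the local context created by the fixed prefix $\boldg$), read the stored position, apply $\psi^{-1}$ to recover $W'$, reinsert it, and finally strip $\boldu$.

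The hard part is engineering the surgery so that several things hold at once and exactly: (i) the compression frees up \emph{precisely} $a$ bits, so the length is preserved on the nose; (ii) the buffers destroy no previously inserted synchronization pattern --- which must be checked against the exact window length $3k+\lceil\log k\rceil+4$ and the $3k$-separation of $\1_{sync}$'s $1$'s; and (iii), the genuinely delicate point, the trailer together with $\boldu$'s flag lets the decoder recognize and peel off surgery sites \emph{without ambiguity}, even though $T(\boldc)$ is by design saturated with synchronization patterns that must never be mistaken for a site. The other substantial piece is the counting bound itself, together with the verification that $L$ as defined comfortably exceeds the threshold $a/\Theta(1/\mathrm{poly}(k))+O(k)$. (In the small regime where $L\ge n$, every length-$n$ sequence is already $k$ dense, so there $T$ just appends the fixed, $k$-dense block $\boldu$ and there is nothing further to prove.)
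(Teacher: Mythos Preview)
Your approach is genuinely different from the paper's and has a real gap in the invertibility argument.

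\textbf{How the paper proceeds.} The paper does \emph{not} attack synchronization-pattern-free intervals directly. Instead it proves (Proposition~\ref{lemma:property1and2}) that $k$-denseness follows from two simpler conditions phrased only in terms of the \emph{elementary} pattern $\boldsymbol{1}^m$: Property~1 says every length-$B$ window contains a $\boldsymbol{1}^m$, Property~2 says every length-$R$ window contains a $(3k+m-1)$-subinterval \emph{without} a $\boldsymbol{1}^m$. Each property is then enforced by its own compression map ($T_1$ in Proposition~\ref{lemma:frequentpattern}, $T_2$ in Proposition~\ref{lemma:3kto3kminus1}), and crucially the rewritten material is \emph{appended at the end} of the string rather than spliced in place. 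The counting needed is the trivial one --- strings of length $B$ avoiding $\boldsymbol{1}^m$ --- and decoding simply peels appended blocks from the right.

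\textbf{Where your argument breaks.} Your single-pattern counting bound is plausible but you have not proved it, and your heuristic is inverted: a synchronization pattern requires that the preceding $3k-1$ symbols contain \emph{no} $\boldsymbol{1}^m$ run, not that they do. More seriously, the decoding is not actually specified. You replace $W'$ in place by $\boldsymbol{g}\,\psi(W')\,\mathrm{bin}(p)\,1$, so surgery sites land at arbitrary interior positions. You say the decoder ``locates the most recent surgery site from its self-delimiting trailer'', but nothing in the format makes it self-delimiting: $\boldsymbol{g}$ is one particular synchronization pattern and $T(\boldc)$ is, by design, saturated with synchronization patterns that are \emph{not} surgery markers; the image $\psi(W')$ is an arbitrary binary string of \emph{variable} length (since $|W'|=q-p-O(k)$ varies with the bad run); and the trailing ``$1$'' is just a bit. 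There is no field the decoder can read to find either the last site or its extent. Worse, if $\psi(W')$ itself happens to be synchronization-pattern-free over a stretch longer than $L$, the next surgery operates \emph{inside} the previous site's payload and overwrites $\mathrm{bin}(p)$, so even if one could locate the last site, unwinding earlier ones requires recovering data that has been clobbered and then restored only implicitly through $\psi^{-1}$ --- a dependency you have not shown terminates unambiguously. You flag exactly this as ``the genuinely delicate point'' but do not resolve it; the paper sidesteps it entirely by appending rather than splicing, which is why its decoder can simply read the tail.
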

The next lemmas are from existing results. 
Lemma~\ref{lemma:hash} gives a~$k$ deletion correcting hash function for short sequences.
It is an extension of the result in~\cite{brakensiek2016efficient}. 
Lemma~\ref{lemma:bkball} is a slight variation of the result in~\cite{levenshtein1966binary}.  It shows the equivalence between correcting deletions and correcting deletions and insertions.  Lemma~\ref{lemma:numberofdivisors} (See \cite{Nicolas}) gives an upper bound on the number of divisors of a positive integer~$n$. Lemma~\ref{lemma:bkball} and Lemma~\ref{lemma:numberofdivisors} will be used in proving Lemma~\ref{lemma:protectingpattern}.
\begin{lemma}\label{lemma:hash}
Let~$k$ be a fixed integer. For integers~$w$ and~$n$, there exists a hash function~$H:\{0,1\}^w\rightarrow$\newline$ \{0,1\}^{\lceil (w/\lceil \log n \rceil) \rceil (2k\log \log n +O(1))}$, computable in~$O_k((w/ \log n)  n\log^{2k} n )$ time, such that any sequence~$\boldc\in\{0,1\}^w$ can be recovered from its length~$w-k$ subsequence~$\boldd$ and the hash~$H(\boldc)$.
\end{lemma}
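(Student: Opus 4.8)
The plan is to reduce the problem of hashing a length-$w$ sequence to hashing a constant number of short (length-$\Theta(\log n)$) blocks, and then invoke the existing short-sequence $k$-deletion hash from~\cite{brakensiek2016efficient} on each block, stitching the pieces together with a small outer code. First I would partition $\boldc\in\{0,1\}^w$ into $\lceil w/\lceil\log n\rceil\rceil$ consecutive blocks $\boldc^{(1)},\ldots,\boldc^{(t)}$, each of length $\lceil\log n\rceil$ (the last one possibly padded). Since $k$ is fixed, the deletions that occur in $\boldc$ to produce $\boldd$ are distributed among these $t$ blocks, but there are at most $k$ of them; the difficulty is that from $\boldd$ alone we do not know the block boundaries, because deletions shift the alignment. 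The key observation is that at most $k$ blocks are affected, so all but $k$ blocks are received intact and in order.

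The second step is to make the hash of~\cite{brakensiek2016efficient} work at the block level. For each block $\boldc^{(i)}$ of length $\ell=\lceil\log n\rceil$, the result of~\cite{brakensiek2016efficient} gives a hash of redundancy $2k\log\log(2^\ell)+O(1)=2k\log\ell+O(1)=2k\log\log n+O(1)$ bits that corrects $k$ deletions \emph{within a block of known boundaries}. I would concatenate these block hashes; their total length is $t\cdot(2k\log\log n+O(1))=\lceil w/\lceil\log n\rceil\rceil(2k\log\log n+O(1))$, matching the claimed output length. To handle the unknown boundaries and the fact that up to $k$ blocks may be corrupted, I would additionally attach a constant-redundancy outer component (for instance indices or a short Reed–Solomon-type check over a suitable alphabet) that lets the decoder identify which $\le k$ blocks were hit and realign; the cost of this is absorbed into the $O(1)$ per block, since $k$ is constant and each block index needs $O(\log t)$ bits but these can be folded into a global $O(k\log w)$ overhead — here one must be slightly careful, but since $w$ is itself at most polynomial in $n$ in the intended application, $O(k\log w)=o(\lceil w/\lceil\log n\rceil\rceil\log\log n)$ can be checked to fit, or alternatively distributed one index-bit per block.

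The decoding algorithm proceeds greedily: read $\boldd$, use the realignment information to segment it into candidate received blocks, identify the $\le k$ corrupted blocks, and for each run the block-level $k$-deletion decoder of~\cite{brakensiek2016efficient} using the stored block hash. The running time is dominated by the block decoders: there are $t=O(w/\log n)$ blocks, and decoding one block of length $O(\log n)$ against its hash costs $O_k(n\log^{2k-1}n)$ or so by the complexity bound in~\cite{brakensiek2016efficient} (the exhaustive search over the $\binom{\text{length}}{k}$-sized deletion ball of a length-$O(\log n)$ string, times the hash evaluation), giving total time $O_k((w/\log n)\,n\log^{2k}n)$ as claimed. The main obstacle I anticipate is rigorously handling the synchronization/realignment across block boundaries — ensuring that the constant-redundancy overlay genuinely suffices to pin down which blocks were affected despite the $k$ shifts, and that no ambiguity survives; this is exactly the kind of bookkeeping where one leans on the known structure in~\cite{brakensiek2016efficient}, so I expect the proof to cite that construction for the per-block guarantee and only add a short argument for the block-identification layer.
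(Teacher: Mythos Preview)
Your block decomposition is correct, but you are working much harder than necessary on the synchronization step, and as you yourself note, the realignment layer is where the argument is shakiest. The paper avoids this issue entirely with a simple observation: if $\boldd$ is obtained from $\boldc$ by $k$ deletions, then for every index $j$ the symbol $d_j$ originated from some position in the interval $[j,j+k]$ of $\boldc$. Hence, writing $\ell=\lceil\log n\rceil$, for block $i$ occupying positions $[(i-1)\ell+1,\,i\ell]$ in $\boldc$ the substring $(d_{(i-1)\ell+1},\ldots,d_{i\ell-k})$ is automatically a length-$(\ell-k)$ subsequence of that block. No realignment, no identification of ``hit'' blocks, and no outer code are needed: \emph{every} block is handed a subsequence short by exactly $k$, and the per-block $k$-deletion hash recovers it directly.

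This also closes the loose end in your redundancy accounting. You needed an extra outer-code term and then appealed to ``$w$ polynomial in $n$ in the intended application'' to absorb it, which is not part of the hypothesis of the lemma. With the observation above, the hash is literally the concatenation of the block hashes, so the total length is exactly $\lceil w/\lceil\log n\rceil\rceil(2k\log\log n+O(1))$ with nothing further to hide. As a minor point, the paper does not invoke~\cite{brakensiek2016efficient} for the per-block hash either: since each block has length $\lceil\log n\rceil$, its deletion ball has size at most $2\lceil\log n\rceil^{2k}$, so a brute-force greedy labeling already yields a $2k\log\log n+O(1)$-bit hash computable in $O_k(n\log^{2k}n)$ time per block, which is what drives the stated overall complexity.
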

\begin{proof}
We first show that there exists a hash function~$H': \{0,1\}^{\lceil \log n \rceil}\rightarrow\{0,1\}^{2k\log\log n +O(1)}$, computable in~$O_k(n \log^{2k}n)$ time, that protects a sequence~$\boldc'\in \{0,1\}^{\lceil \log n\rceil}$ from~$k$ deletions. Note that~$|B_k(\boldc')|\le \binom{\lceil \log n\rceil}{k}^22^k\le 2\lceil\log n\rceil^{2k}$. Hence it suffices to use brute force and assign hash values for each possible~$\boldc'$, and compare with all sequences in~$B_k(\boldc')$. The total complexity is~$O_k(n\log^{2k}n)$ and the size of the hash value~$H(\boldc')$ is~$\log (|B_k(\boldc')|+1)\le 2k\log\log n+O(1)$. 

Now split~$\boldc$ into~$\lceil(w/\lceil \log n\rceil)\rceil$ blocks~$c_{(i-1)\lceil \log n \rceil+1},\ldots,c_{i\lceil \log n \rceil}$,~$i\in [1,\lceil(w/\lceil \log n\rceil)\rceil]$ of length~$\lceil \log n \rceil$. If the length of the last block is less than~$\lceil \log n \rceil$, add zeros to the end of the last block such that its length is~$\lceil \log n \rceil$.
Assign a hash value~$\boldh_i=H'((c_{(i-1)\lceil \log n \rceil+1},\ldots,c_{i\lceil \log n \rceil}))$,~$i\in [1,\lceil(w/\lceil \log n\rceil)\rceil]$ for each block. Let~$H(\boldc)=(\boldh_1,\ldots,\boldh_{\lceil(w/\lceil \log n\rceil)\rceil})$ be the concatenation of~$\boldh_i$ for~$i\in [1,\lceil(w/\lceil \log n\rceil)\rceil]$. The length of~$H(\boldc)$ is~$\lceil (w/\lceil \log n \rceil) \rceil (2k\log \log n +O(1))$ and the complexity of~$H(\boldc)$ is~$O_k((w/ \log n)  n\log^{2k} n )$

Let~$\boldd$ be a length~$n-k$ subsequence of~$\boldc$. Then~$d_{(i-1)\lceil \log n \rceil+1},\ldots,d_{i\lceil \log n \rceil-k}$ is a length~$\lceil \log n \rceil-k$ subsequence of the block~$c_{(i-1)\lceil \log n \rceil+1},\ldots,c_{i\lceil \log n \rceil}$. Hence the~$i$-th block~$c_{(i-1)\lceil \log n \rceil+1},\ldots,c_{i\lceil \log n \rceil}$ can be recovered from~$\boldh_i$ and \newline $d_{(i-1)\lceil \log n \rceil+1},\ldots,d_{i\lceil \log n \rceil-k}$. Therefore,~$\boldc$ can be recovered given~$\boldd$ and~$H(\boldc)$ and the proof is done.
\end{proof}
\begin{lemma}\label{lemma:bkball}
Let~$r$,~$s$, and~$k$ be integers satisfying~$r+s\le k$. For sequences~$\boldc,\boldc'\in\{0,1\}^n$, if~$\boldc'$ and~$\boldc$ share a common resulting sequence after~$r$ deletions and~$s$ insertions in~both, then~$\boldc'\in B_k(\boldc)$.
\end{lemma}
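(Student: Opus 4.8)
The plan is to track the symbols that survive the edit operations on each side and extract from them a common subsequence of $\boldc$ and $\boldc'$ whose length is at least $n-r-s\ge n-k$.

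First I would name the shared sequence $\bolde$ and record how it arises. Since $\bolde$ is obtained from $\boldc$ by $r$ deletions and $s$ insertions, and these operations may be performed in either order, we may first delete the $r$ symbols, producing an intermediate sequence $\boldy$ of length $n-r$ that is a subsequence of $\boldc$, and then perform the $s$ insertions, so that $\boldy$ is also a subsequence of $\bolde$. Fix an index set $S\subseteq[\,n-r+s\,]$ with $|S|=n-r$ witnessing $\boldy=\bolde|_{S}$. Symmetrically, from $\boldc'$ we get a sequence $\boldy'$ of length $n-r$ that is a subsequence of $\boldc'$, together with an index set $S'\subseteq[\,n-r+s\,]$ with $|S'|=n-r$ and $\boldy'=\bolde|_{S'}$.

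Next I would intersect the two witnesses inside $\bolde$. By inclusion--exclusion, $|S\cap S'|\ge |S|+|S'|-(n-r+s)=n-r-s$. The subsequence $\bolde|_{S\cap S'}$ of $\bolde$ is, by transitivity of the subsequence relation, a subsequence of $\boldy=\bolde|_{S}$ and hence of $\boldc$, and likewise a subsequence of $\boldy'=\bolde|_{S'}$ and hence of $\boldc'$. Therefore $\boldc$ and $\boldc'$ share a common subsequence of length at least $n-r-s$, which is at least $n-k$ because $r+s\le k$. Truncating it to length exactly $n-k$ yields a common subsequence of that length, so $\boldc'\in B_k(\boldc)$ by the definition of the deletion ball.

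The argument is essentially bookkeeping, and I do not expect a genuine obstacle; the only point needing care is the reordering claim that an intermediate $\boldy$ of length $n-r$ can be taken that is simultaneously a subsequence of $\boldc$ and of $\bolde$, which is just the observation that the $r$ deletions on $\boldc$ can be carried out before the $s$ insertions. Degenerate cases ($n-k\le 0$, or $r>n$) are immediate since the empty sequence is a subsequence of every sequence. This recovers the standard equivalence, due to Levenshtein, between the deletion ball and the combined deletion/insertion ball.
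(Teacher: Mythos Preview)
Your argument is correct and is essentially the standard Levenshtein-style proof. The paper itself does not prove this lemma; it merely cites it as ``a slight variation of the result in~\cite{levenshtein1966binary},'' so there is no paper proof to compare against. One small point worth tightening: when you assert that the $r$ deletions can be performed before the $s$ insertions so that the intermediate $\boldy$ has length exactly $n-r$, this is literally true only if no inserted symbol is subsequently deleted; in general the reordering yields $r'\le r$ deletions followed by $s'\le s$ insertions with $r-r'=s-s'$, so $|\boldy|=n-r'\ge n-r$. Since you only need $|S|,|S'|\ge n-r$ for the inclusion--exclusion step, this only strengthens the conclusion, and the rest of your bookkeeping goes through unchanged.
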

\begin{lemma}\label{lemma:numberofdivisors}
For a positive integer~$n\ge 3$, the number of divisors of~$n$ is upper bounded by~$2^{1.6\ln n/(\ln \ln n)}$.
\end{lemma}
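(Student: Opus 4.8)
The plan is to run Rankin's trick on the divisor function $d(n)$. Write $n=\prod_{i=1}^{r}p_i^{a_i}$ with $p_1<\cdots<p_r$ the distinct primes dividing $n$, so that $d(n)=\prod_{i=1}^{r}(a_i+1)$, and for a parameter $\delta>0$ to be chosen later,
\[
\frac{d(n)}{n^{\delta}}=\prod_{i=1}^{r}\frac{a_i+1}{p_i^{\delta a_i}}.
\]
Split the factors according to the size of $p_i$. When $p_i^{\delta}\ge 2$, the elementary inequality $a+1\le 2^{a}$ (valid for every integer $a\ge 0$) gives $\frac{a_i+1}{p_i^{\delta a_i}}\le\frac{a_i+1}{2^{a_i}}\le 1$, so these \emph{large} primes contribute nothing to the product. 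For a \emph{small} prime, $p_i<2^{1/\delta}$, a one-line calculus estimate of $g_p(a)=(a+1)p^{-\delta a}$ over real $a\ge 0$ yields $g_p(a)\le\frac{p^{\delta}}{e\,\delta\ln p}\le\frac{1}{\delta\ln p}$, the last step using $p^{\delta}<2<e$. Taking logarithms,
\[
\ln d(n)\le\delta\ln n+\sum_{p<2^{1/\delta}}\ln\frac{1}{\delta\ln p},
\]
the sum running over primes.

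The next step is to control the error sum. Since $\ln\frac{1}{\delta\ln p}$ is decreasing in $p$, each summand is at most $\ln\frac{1}{\delta\ln 2}$, and there are $\pi(2^{1/\delta})$ of them, so the error is at most $\pi(2^{1/\delta})\ln\frac{1}{\delta\ln 2}$. Inserting an explicit Chebyshev/Rosser--Schoenfeld bound $\pi(x)\le c_0 x/\ln x$ converts this into $O\!\left(\delta\,2^{1/\delta}\ln\frac{1}{\delta}\right)$. Now choose $\delta=c/\ln\ln n$ with a constant $c$ in the range $\ln 2<c<1.6\ln 2$. Then $\delta\ln n=c\ln n/\ln\ln n$, while $2^{1/\delta}=(\ln n)^{\ln 2/c}$ has exponent $\ln 2/c<1$, so the error term is $O\!\left((\ln n)^{\ln 2/c}\,\dfrac{\ln\ln\ln n}{\ln\ln n}\right)=o\!\left(\dfrac{\ln n}{\ln\ln n}\right)$. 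Combining, $\ln d(n)\le c\,\dfrac{\ln n}{\ln\ln n}+o\!\left(\dfrac{\ln n}{\ln\ln n}\right)$, and because $c<1.6\ln 2$ the right-hand side is at most $1.6\ln 2\cdot\dfrac{\ln n}{\ln\ln n}$ for all $n\ge n_0$, some absolute $n_0$; exponentiating via $2^{x}=e^{x\ln 2}$ gives $d(n)\le 2^{1.6\ln n/\ln\ln n}$.

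It then remains to settle the finitely many $n$ with $3\le n<n_0$. In that range the claimed bound is extremely loose — even the trivial estimate $d(n)\le n$ already beats it for moderate $n$ — so a short numerical check (or simply invoking the explicit statement of~\cite{Nicolas}) finishes the argument.

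The part that needs genuine care is not the asymptotic order, which the Rankin-trick computation produces cleanly, but forcing the constant to be exactly $1.6$ uniformly over all $n\ge 3$: this is what compels the use of an effective (non-asymptotic) prime-counting bound, honest bookkeeping of the lower-order contributions — in particular the prime $2$, which is always in the \emph{small} category — and the verification of a finite initial segment. Since the lemma is quoted from~\cite{Nicolas}, I would defer this explicit bookkeeping to that reference rather than reproduce it.
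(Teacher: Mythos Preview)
The paper does not prove this lemma at all: it is simply quoted from~\cite{Nicolas} with no argument given. Your Rankin-trick sketch is the standard route to Wigert-type bounds and is correct in substance; since the paper's ``proof'' is a bare citation, you are supplying strictly more than the paper does, and your own final sentence---deferring the explicit constant to~\cite{Nicolas}---already matches the paper's treatment.

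One small correction in your write-up: the remark that ``the trivial estimate $d(n)\le n$ already beats it for moderate $n$'' is only valid for $n$ up to about $20$ (the inequality $n\le 2^{1.6\ln n/\ln\ln n}$ is equivalent to $\ln\ln n\le 1.6\ln 2\approx 1.109$). For the remaining finite range $[21,n_0]$ one genuinely needs the actual divisor counts (or the cited reference), not the trivial bound. This does not affect the argument since you already offer the citation as an alternative, but the sentence as written overstates what $d(n)\le n$ buys you.
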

The proofs of Lemma~\ref{lemma:syncvector},~Lemma~\ref{lemma:protectingpattern},~Lemma~\ref{lemma:recoverforkdense}, and Lemma~\ref{lemma:transformation} rely on several propositions that will be presented in the next sections. For convenience, a dependency graph for the theorem, lemmas and propositions is given in Fig.~\ref{figure:dependencies}.  
\tikzstyle{theorem} = [diamond, draw, fill=green!20, text width=4.5em, text badly centered, node distance=2.5cm, inner sep=0pt]
\tikzstyle{lemma} = [rectangle, draw, fill=blue!20, text width=5em, text centered, rounded corners, minimum height=2em]
\tikzstyle{line} = [draw, -latex']
\tikzstyle{proposition} = [draw, ellipse,fill=red!20, node distance=3cm, minimum height=2em]
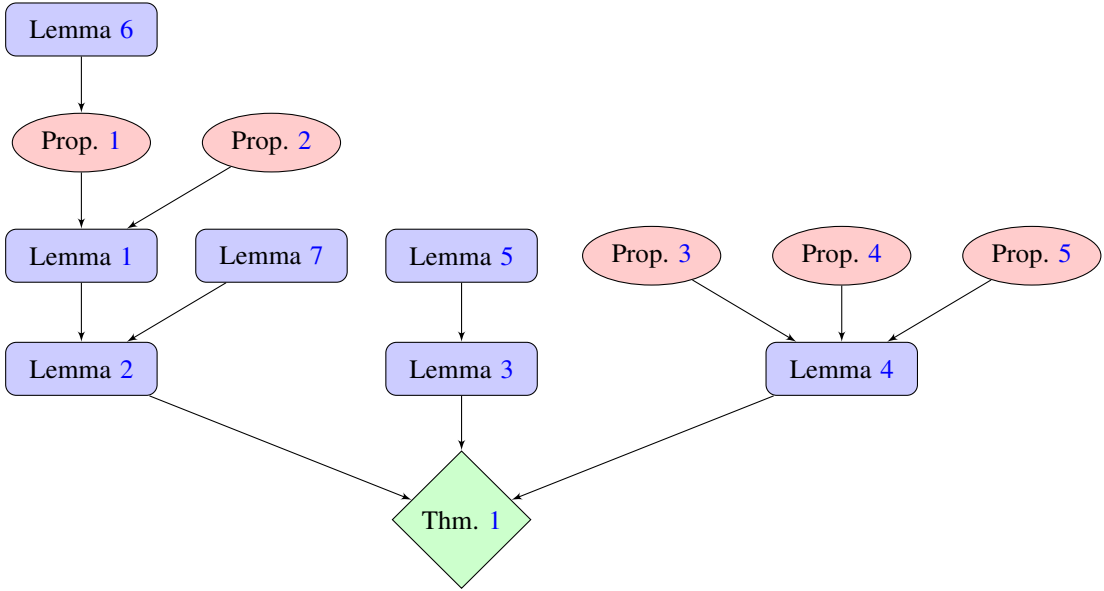
\begin{figure}
    \centering
    \begin{tikzpicture}[node distance = 1.5cm, auto]
    \node[lemma](bkball){Lemma~\ref{lemma:bkball}};
    \node[proposition, below of = bkball, node distance = 1.5cm](editdistance){Prop.~\ref{lemma:editdistance}};
    \node[proposition, right of = editdistance, node distance =2.5cm](vtextension){Prop.~\ref{lemma:vtextension}};
    \node[lemma, below of=editdistance](syncvector){Lemma~\ref{lemma:syncvector}};
    \node[lemma, right of=syncvector,node distance =2.5cm](numberofdivisors){Lemma~\ref{lemma:numberofdivisors}};
    \node[lemma, right of=numberofdivisors,node distance =2.5cm](hash){Lemma~\ref{lemma:hash}};
    \node[proposition, right of=hash,node distance =2.5cm](property1and2){Prop.~\ref{lemma:property1and2}};
    \node[proposition, right of=property1and2,node distance =2.5cm](frequentpattern){Prop.~\ref{lemma:frequentpattern}};
    \node[proposition, right of=frequentpattern,node distance =2.5cm](3kto3kminus1){Prop.~\ref{lemma:3kto3kminus1}};
    \node[lemma, below of=syncvector, node distance = 1.5cm](protectingpattern){Lemma~\ref{lemma:protectingpattern}};
	\node[lemma, below of=hash, node distance = 1.5cm](recoverforkdense){Lemma~\ref{lemma:recoverforkdense}};
    \node[lemma,  below of=frequentpattern, node distance = 1.5cm](transformation){Lemma~\ref{lemma:transformation}};
    \node[theorem, below of=recoverforkdense, node distance = 2cm](main){Thm.~\ref{theorem:main}};
   
    \path [line] (bkball) -- (editdistance);
    \path [line] (editdistance) -- (syncvector);
    \path [line] (vtextension) -- (syncvector);
    \path [line] (syncvector) -- (protectingpattern);
    \path [line] (numberofdivisors) -- (protectingpattern);    
    \path [line] (hash) -- (recoverforkdense);
    \path [line] (property1and2) -- (transformation);
    \path [line] (frequentpattern) -- (transformation);
    \path [line] (3kto3kminus1) -- (transformation);
    \path [line] (protectingpattern) -- (main);
    \path [line] (recoverforkdense) -- (main);
    \path [line] (transformation) -- (main);
\end{tikzpicture}
    \caption{Dependencies of the claims in the paper.}
    \label{figure:dependencies}
\end{figure}

\section{Protecting the synchronization vectors}\label{section:synchronizationvectors}
In this section we present a hash function to protect the synchronization vector~$\1_{sync}(\boldc)$ from~$k$ deletions and prove Lemma~\ref{lemma:protectingpattern}.
We first prove Lemma~\ref{lemma:syncvector},  based on Proposition~\ref{lemma:editdistance} and Proposition~\ref{lemma:vtextension}. In Proposition~\ref{lemma:editdistance} we present a bound on the radius of the~$B_k$ ball for the synchronization vector. In Proposition~\ref{lemma:vtextension}, we prove that the higher order parity check helps correct multiple deletions for sequences in which the distance between two consecutive~$1$'s is at least~$3k$. Since~$\1_{sync}(\boldc)$ is such a sequence, we conclude that the higher order parity check helps recover~$\1_{sync}(\boldc)$. After proving Lemma~\ref{lemma:syncvector}, we finally use Lemma~\ref{lemma:numberofdivisors} to further compress the size of the hash function that protects~$\1_{sync}(\boldc)$ and thus prove Lemma~\ref{lemma:protectingpattern}.
\begin{proposition}\label{lemma:editdistance}
For~$\boldc,\boldc'\in\{0,1\}^n$, if~$\boldc'\in B_k(\boldc)$, then~$\1_{sync}(\boldc')\in B_{3k}(\1_{sync}(\boldc))$.
\end{proposition}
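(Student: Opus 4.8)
The plan is to control the behaviour of $\1_{sync}$ under a single insertion or deletion and then to accumulate over the edits relating $\boldc$ and $\boldc'$. Recall that the insertion/deletion (indel) distance $d$ is a metric on binary words and satisfies $d(\boldu,\boldv)=|\boldu|+|\boldv|-2\,\mathrm{LCS}(\boldu,\boldv)$, where $\mathrm{LCS}(\boldu,\boldv)$ is the length of a longest common subsequence of $\boldu$ and $\boldv$. Since $\boldc'\in B_k(\boldc)$, the words $\boldc$ and $\boldc'$ share a length-$(n-k)$ subsequence, hence $\boldc'$ can be obtained from $\boldc$ by $k$ single-symbol deletions followed by $k$ single-symbol insertions, i.e.\ $2k$ single-symbol operations in total. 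Therefore, once we show that a single insertion or deletion changes $\1_{sync}$ by indel distance at most $3$, the triangle inequality yields $d(\1_{sync}(\boldc),\1_{sync}(\boldc'))\le 6k$, i.e.\ $\mathrm{LCS}(\1_{sync}(\boldc),\1_{sync}(\boldc'))\ge n-3k$; as both vectors have length $n$ this is precisely $\1_{sync}(\boldc')\in B_{3k}(\1_{sync}(\boldc))$ (alternatively, apply Lemma~\ref{lemma:bkball} with $r=3k,\ s=0$ to the length-$(n-3k)$ common subsequence).

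The heart of the argument is thus the single-edit estimate, and the first step is to restate $\1_{sync}$ in run language. Unwinding the definition of a synchronization pattern, $\1_{sync}(\boldc)_i=1$ if and only if $i$ is the left endpoint of a maximal run of at least $\lceil\log k\rceil+5$ ones in $\boldc$ and no run of $\lceil\log k\rceil+5$ consecutive ones starts at any of the positions $i-3k+1,\dots,i-1$. In particular, every nonzero coordinate of $\1_{sync}(\boldc)$ is the left endpoint of a distinct ``long'' maximal $1$-run of $\boldc$, and two consecutive nonzero coordinates are at distance at least $3k$ (the previous such $1$-run already begins a length-$(\lceil\log k\rceil+5)$ run of ones, which would otherwise fall in the forbidden interval), so between two consecutive $1$'s the vector $\1_{sync}(\boldc)$ is a block of at least $3k-1$ zeros.

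Next I would fix a single deletion of $c_j$ (insertions are handled by the symmetry of $d$). The maximal $1$-runs of the shortened word differ from those of $\boldc$ only near $j$: if $c_j=1$, exactly one maximal run loses one symbol (and may drop below the threshold $\lceil\log k\rceil+5$); if $c_j=0$, at most one adjacent pair of maximal runs merges (and the merged run may reach the threshold), while no run is enlarged on both sides. Outside a window of at most $3k+\lceil\log k\rceil+4$ positions around $j$, the vector $\1_{sync}$ is unchanged up to the index shift caused by the deletion. Combining this with the run characterization, one checks that \emph{at most two} coordinates of $\1_{sync}$ change value: the indicator of the affected run and that of the run following it. The decisive observation — and the place where the $3k$-separation of synchronization patterns is used — is that such a change is always local, creating a $1$, destroying a $1$, or sliding a $1$ across a block of zeros: because the $1$'s of $\1_{sync}$ are $3k$ apart, the segment between the old and the new location of a relocated $1$ is entirely zero, and moving a $1$ past a block of zeros costs indel distance exactly $2$, regardless of the length of that block; adding the single symbol accounted for by the $\pm1$ length change, the total is at most $3$. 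A short case check (deleted bit $0$ or $1$; the affected run crossing the threshold or not; its left endpoint moving or not) then confirms $d\le 3$ in every configuration.

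The main obstacle is exactly this single-edit case analysis. A naive count would observe that a deletion disturbs up to $3k+\lceil\log k\rceil+4$ windows, hence potentially that many coordinates of $\1_{sync}$, which is far too weak; the sparsity of the synchronization patterns must be used to see that only $O(1)$ of those coordinates can actually equal $1$ before or after the deletion, and then one has to bound the indel cost of the resulting local modification (a created/destroyed $1$, or a $1$ sliding past a zero block, together with the length change) by $3$ in each case. With that single-edit bound established, the triangle-inequality argument of the first paragraph completes the proof.
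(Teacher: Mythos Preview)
Your plan is correct and follows essentially the same route as the paper. Both arguments reduce to the single-edit estimate ``one deletion in $\boldc$ moves $\1_{sync}(\boldc)$ by indel distance at most $3$'' and then accumulate over the $2k$ edits linking $\boldc$ and $\boldc'$; the paper phrases the accumulation via Lemma~\ref{lemma:bkball} (with $r=2k$ deletions and $s=k$ insertions) rather than the triangle inequality for the indel metric, and organizes the single-deletion case analysis by whether a synchronization pattern is destroyed and/or generated (four cases, each realized by at most two deletions and one insertion in $\1_{sync}$), whereas you slice the cases by the value of the deleted bit and threshold crossing---but these are the same computation in different clothing.
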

\begin{proof}
Since~$\boldc'\in B_k(\boldc)$, the sequences~$\boldc'$ and~$\boldc$ share a common subsequence after~$k$ deletions in both. We now show that a single deletion in~$\boldc$ causes at most two deletions and one insertion in its \emph{synchronization vector}~$\1_{sync}(\boldc)$. 
We first show that a deletion in~$\boldc$ can destroy or generate at most~$1$ \emph{synchronization pattern}. This is because the deletion occurs within the \emph{synchronization pattern} destroyed or generated. Otherwise the \emph{synchronization pattern} is not destroyed or generated by the deletion.
Therefore, we need to consider four cases in total. Let~$\boldd'$ be the subsequence of~$\boldc$ after a single deletion.
\begin{enumerate}
\item The deletion destroys a \emph{synchronization pattern}~$(c_{i+1},\ldots,c_{i+3k+\lceil \log k\rceil +4})$ for some~$i$ and no \emph{synchronization pattern} is generated. Then the sequence~$\1_{sync}(\boldd')$ can be obtained by deleting the~$1$ entry~$\1_{sync}(\boldc)_{i+3k}$ in~$\1_{sync}(\boldc)$.
\item The deletion generates a new \emph{synchronization pattern}~$(c'_{i'+1},\ldots,c'_{i'+3k+\lceil \log k\rceil +4})$ for some~$i'$ and destroys a \emph{synchronization pattern}~$(c_{i+1},\ldots,c_{i+3k+\lceil \log k\rceil +4})$. The sequence~$\1_{sync}(\boldd')$ can be obtained by deleting the~$1$ entry~$\1_{sync}(\boldc)_{i+3k}$ and the~$0$ entry~$\1_{sync}(\boldc)_{i+3k-1}$ in~$\1_{sync}(\boldc)$ and inserting a~$1$ entry at~$\1_{sync}(\boldc)_{i'+3k}$.
\item The deletion generates a new \emph{synchronization pattern}~$(c'_{i'+1},\ldots,c'_{i'+3k+\lceil \log k\rceil +4})$ for some~$i'$ and no \emph{synchronization pattern} is destroyed. Then the
~$\1_{sync}(\boldd')$ can be obtained by deleting  two~$0$ entries~$\1_{sync}(\boldc)_{i'+3k}$ and~$\1_{sync}(\boldc)_{i'+3k+1}$ in~$\1_{sync}(\boldc)$ and inserting a~$1$ entry at~$\1_{sync}(\boldc)_{i'+3k}$.
\item No \emph{synchronization pattern} is generated or destroyed. Then~$\1_{sync}(\boldd')$ can be obtained by deleting a~$0$ entry~$\1_{sync}(\boldc)_{j}$, where~$j$ is the location of the deletion.  
\end{enumerate}
In a summary, in each of the above cases, a single deletion in~$\boldc$ causes at most two deletions and one insertion in~$\1_{sync}(\boldc)$. 
Hence~$k$ deletions in~$\boldc$ and~$\boldc'$ cause at most~$2k$ deletions and~$k$ insertions in~$\1_{sync}(\boldc)$ and~$\1_{sync}(\boldc')$ respectively. According to Lemma~\ref{lemma:bkball}, we have that~$\1_{sync}(\boldc')\in B_{3k}(\1_{sync}(\boldc))$ when~$\boldc'\in B_k(\boldc)$. Hence Proposition~\ref{lemma:editdistance} is proved.
\end{proof}
Let~$\mathcal{R}_m$ be the set of length~$n$ sequences in which the~$0$ runs have length at least~$m-1$, meaning that any two consecutive~$1$ entries in a sequence $\boldc\in\mathcal{R}_m$ have distance at least~$m$. 
\begin{proposition}\label{lemma:vtextension}
For~$\boldc,\boldc'\in \mathcal{R}_{3k}$, if~$\boldc'\in B_{3k}(\boldc)$ and~$ \boldc \cdot \boldm^{(e)}= \boldc' \cdot \boldm^{(e)}$ (recall that~$f(\boldc)_e=\boldc \cdot \boldm^{(e)}\bmod 3kn^{e+1}$) for~$e\in [0,,6k]$, then~$\boldc=\boldc'$.
\end{proposition}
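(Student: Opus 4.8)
The plan is to reduce the statement to an injectivity property of the higher-order ``integration'' maps $\boldc\mapsto\boldc\cdot\boldm^{(e)}$ on the restricted class $\mathcal R_{3k}$, using the classical idea behind the VT construction: if $\boldc$ and $\boldc'$ agree on all of the parity checks $\boldc\cdot\boldm^{(e)}$ for $e\in[0,6k]$, and both are obtainable from a common subsequence by at most $3k$ deletions and at most $3k$ insertions (which is exactly what $\boldc'\in B_{3k}(\boldc)$ gives us, after invoking Lemma~\ref{lemma:bkball} in the direction that turns the deletion ball into a bounded number of edits on both sides), then $\boldc=\boldc'$. So the first step is to fix a common ancestor $\bolde$ and write $\boldc$ (resp.\ $\boldc'$) as $\bolde$ with some set of $\le 3k$ symbols inserted at specified positions. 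The difference $\boldc\cdot\boldm^{(e)}-\boldc'\cdot\boldm^{(e)}$ then becomes a signed sum over the inserted/deleted $1$'s, weighted by $m^{(e)}_j=\sum_{i\le j}i^e$, and I want to argue that if this vanishes for every $e\in[0,6k]$ then all the displacements cancel, forcing $\boldc=\boldc'$.

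The key structural point — and the place where the constraint $\boldc,\boldc'\in\mathcal R_{3k}$ is used — is that because consecutive $1$'s are at distance at least $3k$, the effect of the $\le 3k$ edits on the positions of the $1$'s in $\boldc$ relative to $\boldc'$ can be tracked $1$-by-$1$: each $1$ of $\boldc$ is shifted by a bounded amount, and no two $1$'s can ``collide'' or reorder, so $\boldc$ and $\boldc'$ have the same number $w$ of $1$'s, say at positions $p_1<\dots<p_w$ in $\boldc$ and $q_1<\dots<q_w$ in $\boldc'$, with $|p_t-q_t|\le 3k$ for every $t$ (the spacing hypothesis is what prevents the cheaper explanation where a $1$ in one sequence is matched against a different $1$ in the other). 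Then $\boldc\cdot\boldm^{(e)}=\sum_{t=1}^w m^{(e)}_{p_t}$, and similarly for $\boldc'$, so the hypothesis says $\sum_{t=1}^w\big(m^{(e)}_{p_t}-m^{(e)}_{q_t}\big)=0$ for all $e\in[0,6k]$. Writing $m^{(e)}_{p}-m^{(e)}_{q}=\pm\sum_{i} i^e$ over the integers $i$ strictly between $\min(p,q)$ and $\max(p,q)$, I get a system of $6k+1$ power-sum identities in at most $2\cdot 3k$ ``active'' indices with $\pm1$ coefficients; a Vandermonde / polynomial-interpolation argument then forces the multiset of positively-signed indices to equal the multiset of negatively-signed ones, which in turn forces $p_t=q_t$ for all $t$, i.e.\ $\boldc=\boldc'$.

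In detail, I would set this up as: let $S$ be the multiset of integers $i$ appearing with a $+$ sign (those between $q_t$ and $p_t$ when $p_t>q_t$, plus those between $p_t$ and $q_t$ summed with the other orientation) and $S'$ the multiset with a $-$ sign; the identities become $\sum_{i\in S}i^e=\sum_{i\in S'}i^e$ for $e=0,1,\dots,6k$. Since each $1$ contributes at most $3k$ indices on one side, $|S|,|S'|\le 3k\cdot w$ — but that is too large in general, so the finer point is that the contribution of each individual $1$ is an interval of length $\le 3k$ and these intervals are \emph{disjoint} across different $t$ (again by the $3k$-spacing, the shifted interval of the $t$-th $1$ cannot reach the $(t\pm1)$-th), so after cancellation within each $t$ the ``bad'' indices from distinct $t$ never interact and it suffices to kill each $t$ separately — reducing to the case $w=1$, where $S,S'$ are two disjoint (or cancelling) intervals of length $\le 3k$ each, hence at most $6k$ distinct indices total, and the $6k+1$ power-sum equations plus Newton's identities force $S=S'=\varnothing$, i.e.\ $p_t=q_t$.

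The main obstacle I anticipate is making rigorous the claim that the $\le 3k$ edits move each $1$ by at most $3k$ positions \emph{without changing the matching} — i.e.\ that the ``obvious'' index-by-index pairing $p_t\leftrightarrow q_t$ is forced and that the per-$t$ index intervals stay disjoint. This is exactly where $\mathcal R_{3k}$ is essential, and it likely needs a careful combinatorial lemma about how insertions/deletions interact with runs of $0$'s of length $\ge 3k-1$: roughly, since each single edit shifts everything after it by one and there are at most $3k$ edits total (counting both $\boldc$ and $\boldc'$'s edits from the common ancestor), the cumulative shift is bounded by $3k$, which is strictly less than the guaranteed gap, so the relative order and identity of the $1$'s is preserved. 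Once that bookkeeping is in place, the power-sum/Vandermonde finish is routine — the moduli $3kn^{e+1}$ in the definition of $f$ are not even needed here since the proposition assumes exact equality of $\boldc\cdot\boldm^{(e)}$, but the bound ``each term lives in a range of size $< 3kn^{e+1}$'' is presumably what lets one later replace exact equality by equality mod $3kn^{e+1}$ in the proof of Lemma~\ref{lemma:syncvector}.
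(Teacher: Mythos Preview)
Your proposal has two genuine gaps, and the second one is where the real work of the proposition lies.

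\textbf{Gap 1: the pairing of $1$'s is not well-defined.} The claim that $\boldc,\boldc'$ have the same number $w$ of $1$'s with $|p_t-q_t|\le 3k$ does not follow from $\boldc,\boldc'\in\mathcal R_{3k}$ and $\boldc'\in B_{3k}(\boldc)$ alone. Deletions can remove $1$'s outright, not merely shift them: with $3k=3$, $n=7$, take $\boldc=1001001\in\mathcal R_3$ and $\boldc'=0000000\in\mathcal R_3$; they share the length-$4$ subsequence $0000$, so $\boldc'\in B_3(\boldc)$, yet the counts of $1$'s differ. Your heuristic ``each single edit shifts everything after it by one'' tracks positional drift but ignores that an edit can turn a $1$ into a $0$. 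So the matching $p_t\leftrightarrow q_t$ you build the rest of the argument on is not available in general.

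\textbf{Gap 2: the reduction to $w=1$ is invalid even granting the pairing.} Disjointness of the intervals $I_t$ only says that distinct $t$'s contribute distinct indices~$i$; it does \emph{not} let you decouple the single global constraint $\sum_t \epsilon_t\sum_{i\in I_t} i^e=0$ into per-$t$ constraints. With $w$ signed interval-blocks and $w$ possibly as large as $\Theta(n/k)$, the $6k+1$ power-sum equations are too few for a direct Vandermonde/Newton argument (Prouhet--Tarry--Escott--type configurations show that many disjoint indices can share low-order power sums). Nothing in your outline bounds the number of sign blocks by something comparable to $6k$.

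The paper circumvents both issues by a different decomposition. It rewrites
\[
\boldc\cdot\boldm^{(e)}-\boldc'\cdot\boldm^{(e)}=\sum_{i=1}^n s_i\,i^e,\qquad s_i=|\boldsymbol\Delta\cap[i,n]|-|\boldsymbol\Delta'\cap[i,n]|,
\]
and then partitions $[1,n]$ not by the positions of the $1$'s but by the at most $6k$ \emph{deletion positions} in $\boldsymbol\delta\cup\boldsymbol\delta'$, yielding exactly $6k+1$ subintervals. The membership $\boldc,\boldc'\in\mathcal R_{3k}$ is used to prove that $s_i$ has constant sign on each such subinterval (this is the delicate combinatorial step). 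That bounds the number of signed blocks by $6k+1$ independently of how many $1$'s there are, and a multilinear Vandermonde-determinant argument on the resulting $(6k+1)\times(6k+1)$ system forces every $s_i=0$, hence $\boldc=\boldc'$. The missing idea in your plan is precisely this: control the number of sign blocks via the deletion locations, not via the $1$'s.
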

\begin{proof}
We first compute the difference~$ \boldc \cdot \boldm^{(e)}- \boldc' \cdot  \boldm^{(e)}$.
Since~$\boldc'\in B_{3k}(\boldc)$, there exist two subsets~$\boldsymbol{\delta}=\{\delta_{1},\ldots,\delta_{3k}\}\subset \{1,\ldots,n\}$ and~$\boldsymbol{\delta}'=\{\delta'_{1},\ldots,\delta'_{3k}\}\subset \{1,\ldots,n\}$ such that deleting bits with indices~$\boldsymbol{\delta}$ and~$\boldsymbol{\delta}'$ respectively from~$\boldc$ and~$\boldc'$ result in the same length~$n-3k$ subsequence, i.e.,~$(c_i:i\notin\boldsymbol{\delta})=(c'_i:i\notin\boldsymbol{\delta'})$. Let~$\boldsymbol{\Delta}=\{i:c_i=1\}$ and~$\boldsymbol{\Delta}'=\{i:c'_i=1\}$ be the indices of~$1$ entries in~$\boldc$ and~$\boldc'$ respectively.
Let~$S_1=\boldsymbol{\Delta}\cap \boldsymbol{\delta}$ 
be the indices of~$1$ entries that are deleted in~$\boldc$. Then
$S^c_1=\boldsymbol{\Delta}\cap ([1,n]\backslash \boldsymbol{\delta})$ denotes the indices of~$1$ entries  that are not deleted. Similarly let~$S_2=\boldsymbol{\Delta}'\cap \boldsymbol{\delta}'$ and~$S^c_2=\boldsymbol{\Delta}'\cap ([1,n]\backslash \boldsymbol{\delta}')$ be the indices of $1$ entries that are deleted and not in~$\boldc'$ respectively.
Let the elements in~$\boldsymbol{\delta}\cup \boldsymbol{\delta'}$ be ordered by~$1\le p_1\le p_2\le\ldots\le p_{6k}\le n$. Denoting~$p_0=0$ and~$p_{6k+1}=n$, we have that
\begin{align}\label{equation:difference}
    \boldc \cdot \boldm^{(e)}- \boldc' \cdot  \boldm^{(e)}&=\sum_{\ell\in \boldsymbol{\Delta}}\boldm^{(e)}_\ell - \sum_{\ell\in\boldsymbol{ \Delta}'}\boldm^{(e)}_\ell
    \nonumber\\
    &=\sum_{\ell\in \boldsymbol{\Delta}}(\sum^\ell_{i=1}i^e) - \sum_{\ell\in\boldsymbol{ \Delta}'}(\sum^\ell_{i=1}i^e)\nonumber\\
    &=\sum^n_{i=1}(\sum_{\ell\in \boldsymbol{\Delta}\cap[i,n]}i^e) - \sum^n_{i=1}(\sum_{\ell\in \boldsymbol{\Delta'}\cap [i,n]}i^e)\nonumber\\
    &=\sum^n_{i=1}(|\boldsymbol{\Delta}\cap[i,n]|-|\boldsymbol{\Delta}'\cap[i,n]|)i^e \nonumber\\
    &= \sum^n_{i=1}(|S_1\cap[i,n]|+|S^c_1\cap[i,n]|-|S_2\cap[i,n]|-|S^c_2\cap[i,n]|)i^e\nonumber\\
    &=\sum^{6k}_{j=0}\sum^{p_{j+1}}_{i=p_{j}+1}(|S_1\cap[i,n]|-|S_2\cap[i,n]|+|S^c_1\cap[i,n]|-|S^c_2\cap[i,n]|)i^e\nonumber\\
    &\overset{(1)}{=}\sum^{6k}_{j=0}\sum^{p_{j+1}}_{i=p_{j}+1}(|S_1\cap[p_{j+1},n]|-|S_2\cap[p_{j+1},n]|+|S^c_1\cap[i,n]|-|S^c_2\cap[i,n]|)i^e,
\end{align}
where~$(1)$ holds since there is no~$1$ entry in interval~$(p_j,p_{j+1})$.
In the following we show
\begin{enumerate}
    \item $-1\le |S^c_1\cap[i,n]|-|S^c_2\cap[i,n]|\le 1$ for~$i\in [1,n]$.
    \item  For each interval~$(p_j,p_{j+1}]$,~$j=0,\ldots,6k$, we have either~$|S^c_1\cap[i,n]|-|S^c_2\cap[i,n]|\le 0$ for all~$i\in (p_j,p_{j+1}]$ or
$|S^c_1\cap[i,n]|-|S^c_2\cap[i,n]|\ge 0$
for all~$i\in (p_j,p_{j+1}]$. 
\end{enumerate}
\emph{Proof of (1)}: 
Since deleting bits with indices~$\boldsymbol{\delta}$ in~$\boldc$ and deleting bits with indices~$\boldsymbol{\delta'}$ in~$\boldc'$ result in the same subsequence.
Hence for every~$i\in S^c_1$, there is a unique corresponding index~$i'\in S^c_2$ such that the two~$1$ entries~$c_i$ and~$c'_{i'}$ end in the same location after deletions, i.e.,~$i-|\boldsymbol{\delta}\cap [1,i-1]|=i'-|\boldsymbol{\delta}'\cap [1,i'-1]|$. This implies that~$|i'-i|\le 3k$. 
Fix integers~$i$ and~$i'$. Then by definition of~$i$ and~$i'$, for every~$x\in S^c_1\cap [i+1,n]$ there is a unique corresponding~$y\in S^c_2\cap[i'+,n]$ such that the two~$1$ entries~$c_x$ and~$c'_y$ end in the same location after deletions. Hence we have that~$|S^c_1\cap[i+1,n]|=|S^c_2\cap[i'+1,n]|$ and that~$|S^c_1\cap[i,n]|=|S^c_2\cap[i',n]|$. Without loss of generality assume that~$i'\ge i$. Then,
\begin{align*}
    |S^c_1\cap[i,n]|-|S^c_2\cap[i,n]|&=|S^c_1\cap[i,n]|-|S^c_2\cap[i',n]|-
    |S^c_2\cap[i+1,i']|\\
    &=-|S^c_2\cap[i+1,i']|\\
    &\overset{(a)}{\ge} -|S^c_2\cap[i+1,i+3k]|\\
    &\overset{(b)}{\ge} -1,
\end{align*}
where~$(a)$ follows from the fact that~$|i'-i|\le 3k$ and~$(b)$ follows from the fact that~$\boldc,\boldc'\in\cR_{3k}$. Similarly, when~$i'\le i$, we have that~$|S^c_1\cap[i,n]|-|S^c_2\cap[i,n]|\le 1$. This completes the proof of~$(1)$

We now prove~$(2)$ by contradiction. Supposed on the contrary, there exist~$i_1,i_2\in (p_j,p_{j+1}]$ such that~$i_1<i_2$ and
\begin{align*}
    (|S^c_1\cap[i_1,n]|-|S^c_2\cap[i_1,n]|)(|S^c_1\cap[i_2,n]|-|S^c_2\cap[i_2,n]|)<0
\end{align*}
By symmetry it can be assumed that~$|S^c_1\cap[i_1,n]|-|S^c_2\cap[i_1,n]=-1$ and~$|S^c_1\cap[i_2,n]|-|S^c_2\cap[i_2,n]|=1$. Note that~$|S^c_1|=|S^c_2|$ and there exists a one-to-one correspondence between the elements in~$S^c_1$ and the elements in~$S^c_2$. Hence from~$|S^c_1\cap[i_1,n]|-|S^c_2\cap[i_1,n]=-1$, there exist two integers~$x\in S^c_1\cap[i_1-1,n]$
and~$y\in S^c_2\cap[i_1,n]$ such that the two~$1$ entries~$c_x$ and~$c'_y$ are in the same location after deletions, i.e.,~$x-|\boldsymbol{\delta}\cap[1,x-1]|=y-|\boldsymbol{\delta}'\cap[1,y-1]|$. Therefore, we have that
\begin{align*}
    i_1-|\boldsymbol{\delta}\cap[1,i_1-1]|>&i_1-1-|\boldsymbol{\delta}\cap[1,i_1-1]|\\
    \ge&x-|\boldsymbol{\delta}\cap[1,x-1]|\\
    =&y-|\boldsymbol{\delta}'\cap[1,y-1]|\\
    \ge& i_1-|\boldsymbol{\delta}'\cap[1,i_1-1]|,
\end{align*}
which implies that
\begin{align}\label{equation:contradiction1}
    |\boldsymbol{\delta}\cap[1,i_1-1]|<|\boldsymbol{\delta}'\cap[1,i_1-1]|.
\end{align}
 Similarly, from~$|S^c_1\cap[i_2,n]|-|S^c_2\cap[i_2,n]|=1$ we have that
\begin{align}\label{equation:contradiction2}
|\boldsymbol{\delta}\cap[1,i_2-1]|>|\boldsymbol{\delta}'\cap[1,i_2-1]| .   
\end{align}
Eq.~\eqref{equation:contradiction1} and Eq.~\eqref{equation:contradiction2} implies that
\begin{align}\label{equation:contradiction}
|\boldsymbol{\delta}\cap[1,i_2-1]|-|\boldsymbol{\delta}\cap[1,i_1-1]|\ge &
|\boldsymbol{\delta}'\cap[1,i_2-1]|+1-|\boldsymbol{\delta}'\cap[1,i_1-1]|+1\nonumber\\
\ge & 2.
\end{align}
However, since~$i_1,i_2\in (p_j,p_{j+1}]$, we have that~$|\boldsymbol{\delta}\cap[1,i_1]|=|\boldsymbol{\delta}\cap[1,i_2-1]|$ and~$|\boldsymbol{\delta}'\cap[1,i_1]|=|\boldsymbol{\delta}'\cap[1,i_2-1]|$,
which implies that
\begin{align*}
   |\boldsymbol{\delta}\cap[1,i_2-1]|-|\boldsymbol{\delta}\cap[1,i_1-1]|\le
   &|\boldsymbol{\delta}\cap[1,i_2-1]|-|\boldsymbol{\delta}\cap[1,i_1]|+1\\
   =&1.
\end{align*}
contradicting to Eq.~\eqref{equation:contradiction}. Hence there do not exist different integers~$i_1,i_2\in (p_j,p_{j+1}]$ such that
\begin{align*}
    (|S^c_1\cap[i_1,n]|-|S^c_2\cap[i_1,n]|)(|S^c_1\cap[i_2,n]|-|S^c_2\cap[i_2,n]|)<0.
\end{align*}
Hence~$(2)$ is proved.

Denote
\begin{align*}
s_i\triangleq |S_1\cap[i,n]|-|S_2\cap[i,n]|+|S^c_1\cap[i,n]|-|S^c_2\cap[i,n]|
\end{align*}
Note that~$|S_1\cap[i,n]|-|S_2\cap[i,n]|=|S_1\cap[p_{j+1},n]|-|S_2\cap[p_{j+1},n]|$ for~$i\in (p_j,p_{j+1}]$.
Hence from~$(1)$ and~$(2)$ it follows that for each interval~$(p_j,p_{j+1}]$,~$j\in\{0,\ldots,6k\}$, either~$s_i\ge 0$ for all~$i\in (p_j,p_{j+1}]$ or~$s_i\le 0$ for all~$i\in (p_j,p_{j+1}]$. 
Let~$\boldx=(x_0,\ldots,x_{6k})\in \{-1,1\}^{6k+1}$ be a vector defined by
\begin{align*}
x_i &= \begin{cases}
-1, &\text{if~$s_j<0$ for some~$j\in (p_i,p_{i+1}]$}\\
1,&\text{else.}\\
\end{cases}.
\end{align*}
Then from Eq.~\eqref{equation:difference} we have that
\begin{align}\label{equation:linearequation}
    \boldc \cdot \boldm^{(e)}- \boldc' \cdot  \boldm^{(e)}= \sum^{6k}_{j=0}(\sum^{p_{j+1}}_{i=p_{j}+1}|s_i|i^e)x_j
\end{align}
Let~$A$ be a~$6k+1\times 6k+1$ matrix with entries defined by~$A_{e,j}=\sum^{p_{j}}_{i=p_{j-1}+1}|s_i|i^{e-1}$ for~$e,j\in\{1,\ldots,6k+1\}$.
If~$\boldc \cdot \boldm^{(e)}= \boldc' \cdot  \boldm^{(e)}$ for~$e\in [0,6k]$, we have the following linear equation
\begin{align}\label{equation:linear}
A\boldsymbol{x}=\begin{bmatrix}
\sum^{p_{1}}_{i=p_{0}+1}|s_i|i^0 &\ldots & \sum^{p_{6k+1}}_{i=p_{6k}+1}|s_i|i^0\\
\vdots &\ddots & \vdots\\
\sum^{p_{1}}_{i=p_{0}+1}|s_i|i^{6k} &\ldots & \sum^{p_{6k+1}}_{i=p_{6k}+1}|s_i|i^{6k}
\end{bmatrix}
\begin{bmatrix}
x_0\\
\vdots\\
x_{6k}\\
\end{bmatrix}
=0
\end{align}
We show that this is impossible unless~$A$ is a zero matrix. Suppose on the contrary that~$A$ is nonzero, let~$j_1<\ldots <j_Q$ be the indices of all nonzero columns of~$A$. Let~$B$ be a submatrix of~$A$, obtained by choosing the intersection of the first~$Q$ rows and columns with indices~$j_1,\ldots,j_Q$. Then we have that
\begin{align}\label{equation:linear}
B\boldsymbol{x}'=\begin{bmatrix}
\sum^{p_{j_1}}_{i=p_{j_1-1}+1}|s_i|i^0 &\ldots & \sum^{p_{j_Q}}_{i=p_{j_{Q-1}}+1}|s_i|i^0\\
\vdots &\ddots & \vdots\\
\sum^{p_{j_1}}_{i=p_{j_1-1}+1}|s_i|i^{Q-1} &\ldots & \sum^{p_{j_Q}}_{i=p_{j_{Q-1}}+1}|s_i|i^{Q-1}
\end{bmatrix}
\begin{bmatrix}
x_0\\
\vdots\\
x_{Q-1}\\
\end{bmatrix}
=0
\end{align}
By the multi-linearity of the determinant, 
\begin{align}
\det(B)=&\sum_{i_1\in (p_{j_1-1},p_{j_1}],\ldots,i_M\in (p_{j_Q-1},p_{j_Q}]}\det \begin{pmatrix} |s_{i_1}|i^0_1&\ldots&|s_{i_Q}|i^0_Q\\ \vdots&\ddots&\vdots\\ |s_{i_1}|i_1^{Q-1}&|s_{i_Q}|\ldots&i_Q^{Q-1} \end{pmatrix}\nonumber\\
=&\sum_{i_1\in (p_{j_1-1},p_{j_1}],\ldots,i_Q\in (p_{j_{Q-1}},p_{j_Q}]}[\prod^Q_{q=1} |s_{i_q}| \prod_{1\le m<\ell \le Q}(i_{\ell}-i_{m})]
\end{align}
is positive since~$i_{\ell}>i_{m}$ for~$m>\ell$
and for~$i_1\in (p_{j_1-1},p_{j_1}],\ldots,i_Q\in (p_{j_Q-1},p_{j_Q}]$. Note that all the columns of~$B$ are nonzero. 
Therefore, the linear equation~$B\boldsymbol{x}'=0$ does not have nonzero solutions, contradicting to the fact that~$\boldsymbol{x}'\in\{-1,1\}^Q$.
Hence~$A$ is a zero matrix, meaning that
\begin{align*}
&|S_1\cap[i,n]|-|S_2\cap[i,n]|+|S^c_1\cap[i,n]|-|S^c_2\cap[i,n]|\\
=&|\boldsymbol{\Delta}\cap [i,n]|-|\boldsymbol{\Delta}'\cap [i,n]|=0
\end{align*}
for~$i\in\{1,\ldots,n\}$. This implies~$\boldsymbol{\Delta}=\boldsymbol{\Delta}'$ and thus~$\boldc=\boldc'$. Hence Proposition~\ref{lemma:vtextension} is proved.
\end{proof}
\subsection{Proof of Lemma~\ref{lemma:syncvector}}\label{subsection:proofoflemma1}
From Proposition~\ref{lemma:editdistance} we have that~$\1_{sync}(\boldc')\in B_{3k}(\1_{sync}(\boldc))$. Hence~$(\1_{sync}(\boldc')_i,\ldots,\1_{sync}(\boldc')_n)\in B_{3k}((\1_{sync}(\boldc)_i,\ldots,\1_{sync}(\boldc)_n))$. 
This implies that~$||\boldsymbol{\Delta}\cap[i,n]|-|\boldsymbol{\Delta}'\cap[i,n]||\le 3k$, where~~$\boldsymbol{\Delta}=\{i:\1_{sync}(\boldc)_i=1\}$ and~$\boldsymbol{\Delta}'=\{i:\1_{sync}(\boldc')_i=1\}$. Hence According to Eq.~\eqref{equation:difference}, we have that
\begin{align}\label{equation:upperboundg}
    |\1_{sync}( \boldc) \cdot \boldm^{(e)}- \1_{sync}(\boldc') \cdot  \boldm^{(e)}| =&|\sum^n_{i=1}(|\boldsymbol{\Delta}\cap[i,n]|-|\boldsymbol{\Delta}'\cap[i,n]|)i^e|,\nonumber\\
    \le&\sum^n_{i=1}3ki^e\nonumber\\
    <& 3kn^{e+1}.
\end{align}
If~$f(\1_{sync}(\boldc))=f(\1_{sync}(\boldc'))$, then~$\1_{sync}( \boldc) \cdot \boldm^{(e)}\equiv \1_{sync}(\boldc') \cdot  \boldm^{(e)} \bmod{3kn^{e+1}}$, which from Eq.~\eqref{equation:upperboundg} implies that~$\1_{sync}( \boldc) \cdot \boldm^{(e)}= \1_{sync}(\boldc') \cdot  \boldm^{(e)}$.  Since~$\1_{sync}(\boldc')\in B_{3k}(\1_{sync}(\boldc))$ and $\1_{sync}(\boldc),\1_{sync}(\boldc')\in\mathcal{R}_{3k}$, from Proposition \ref{lemma:vtextension} we conclude that~$\1_{sync}(\boldc)=\1_{sync}(\boldc')$. Hence Lemma~\ref{lemma:syncvector} is proved.

\subsection{Proof of Lemma~\ref{lemma:protectingpattern}}\label{subsection:proofoflemma2}
We are now ready to prove Lemma~\ref{lemma:protectingpattern}.
We have shown in Lemma~\ref{lemma:syncvector} that~$f(\1_{sync}(\boldc))\ne f(\1_{sync}(\boldc'))$ for~$\boldc'\in B_k(\boldc)\backslash \{\boldc\}$. Hence
\newline$|M(f(\1_{sync}(\boldc)))-M(f(\1_{sync}(\boldc')))|\ne 0$ (recall Eq.~\eqref{equation:Mfunction} for definition of function~$M(\boldv)$) for~$\boldc'\in B_k(\boldc)\backslash \{\boldc\}$.
According to Lemma~\ref{lemma:numberofdivisors}, the number of divisors of~$|M(f(\1_{sync}(\boldc)))-M(f(\1_{sync}(\boldc')))|$ is upper bounded by
\begin{align*}
2^{2[(3k+1)(6k+1)\ln n+(6k+1)\ln 3k]/\ln ((3k+1)(6k+1)\ln n+(6k+1)\ln 3k)}=2^{o(\log n)}. 
\end{align*}
For any sequence~$\boldc\in\{0,1\}^n$, let 
\begin{align*}
    \mathcal{P}(\boldc)=\{p: p| |M(f(\1_{sync}(\boldc')))-M(f(\1_{sync}(\boldc)))| \text{ for some $\boldc'\in B_k(\boldc)\backslash \{\boldc\}$}\} 
\end{align*}
be the set of all divisors of the numbers~$\{|M(f(\1_{sync}(\boldc')))-M(f(\1_{sync}(\boldc)))|: \boldc'\in B_k(\boldc)\backslash \{\boldc\}\}$. Since~$|B_k(\boldc)|\le \binom{n}{k}^22^{k} \le 2n^{2k}$, we have that
\begin{align*}
    |\mathcal{P}(\boldc)|\le& 2n^{2k}2^{o(\log n)}\\=&2^{2k\log n +o(\log n)}.
\end{align*}
Therefore, there exists a number~$p(\boldc)\in [1,2^{2k\log n +o(\log n)}]$
such that~$p(\boldc) \centernot | |M(f(\1_{sync}(\boldc')))-M(f(\1_{sync}(\boldc)))|$ for all~$\boldc'\in B_k(\boldc)\backslash \{\boldc\}$. 
Hence, if~$M(f(\1_{sync}(\boldc')))\equiv M(f(\1_{sync}(\boldc)))\bmod p(\boldc)$ and~$\boldc'\in B_k(\boldc)$, we have that~$p(\boldc)||M(f(\1_{sync}(\boldc')))-M(f(\1_{sync}(\boldc)))|$ and thus~$\boldc'=\boldc$. 
This completes the proof of Lemma~\ref{lemma:protectingpattern}.
\section{Hash for~\emph{$k$ dense} sequences}\label{section:kdensehash}
In this section, we present a hash function for correcting~$k$ deletions in a~$k$-\emph{dense} sequence~$\boldc$. The encoding/decoding assumes the knowledge of the \emph{synchronization vector}~$\1_{sync}(\boldc)$ and proves Lemma~\ref{lemma:recoverforkdense}. 

Let the positions of the~$1$ entries in~$\1_{sync}(\boldc)$ be ~$t_1<t_2<\ldots<t_{J}$, where~$J=\sum^n_{i=1}\1_{sync}(\boldc)_i$ is the number of~$1$ entries in ~$\1_{sync}(\boldc)$. Furthermore, let~$t_0=0$ and~$t_{J+1}=n+1$. Split~$\boldc$ into blocks~$\bolda_0,\ldots,\bolda_J$, where
\begin{align}\label{equation:aj}
\bolda_j = (c_{t_j+1},c_{t_j+2},\ldots,c_{t_{j+1}-1})
\end{align}
for~$j\in[0,J]$.
Since~$\boldc$ is~$k$-\emph{dense}, we have that the length~$|\bolda_j|$ of~$\bolda_j$ is not greater than L.
Define the hash function~$Hash_k$ as follows.
\begin{align}\label{equation:hashk}
   Hash_k(\boldc)=RS_{2k}((H(\bolda_0),\ldots,H(\bolda_J))), 
\end{align}
where~$RS_{2k}(\boldc)$ is the redundancy of a systematic Reed-Solomon code (see~\cite{Gao} for details) protecting the length~$J+1$ sequence~$(H(\bolda_0),\ldots,H(\bolda_J))$ from~$2k$ symbol substitution errors. Here the symbols are~$H(\bolda_j)$ (see Lemma~\ref{lemma:hash} for definition of~$H(\boldc)$),~$j\in [0,J]$, each having alphabet size not greater than~$2^{\lceil (L/\lceil \log n \rceil)\rceil (2k\log\log n+O(1))}$ and can be represented using~$\lceil (L/\lceil \log n \rceil)\rceil (2k\log\log n+O(1))$ bits. The length of~$Hash_k(\boldc)$ is~$\max \{4k\log (J+1),4k\lceil (L/\lceil \log n \rceil)\rceil (2k\log\log n+O(1))\}=4k\log n +o(\log n)$.
We now present the following procedure that recovers~$\boldc$ from its length~$n-k$ subsequence~$\boldd$, given the hash function~$Hash_k(\boldc)$ and the synchronization vector~$\1_{sync}(\boldc)$ recovered in Section~\ref{section:synchronizationvectors}. 
\begin{enumerate}
    \item \textbf{Step 1:} Let~$\1_{sync}(\boldd)\in\{0,1\}^{n-k}$ be the \emph{synchronization vector} of~$\boldd$.
    Recall that the locations of~$1$ entries in~$\1_{sync}(\boldc)$ are~$1\le t_1<\ldots<t_J\le n$. Let~$t_0=0$ and~$t_{J+1}=n+1$.
    \item \textbf{Step 2:} Let~$\1_{sync}(\boldd)_0=\1_{sync}(\boldd)_{n+1-k}=1$. For each~$j\in [0,J]$, if there exist two numbers~$i_j\in [t_j-k,t_j]$ and~$i_{j+1}\in [t_{j+1}-k,t_{j+1}]$ such that~$\1_{sync}(\boldd)_{i_{j}}=\1_{sync}(\boldd)_{i_{j+1}}=1$,  let~$\bolda'_j=(d_{i_j+1},d_{i_j+2},\ldots,
    d_{i_{j+1}-1})$. Else let~$\bolda'_j=0$.
    \item \textbf{Step 3:} Apply the Reed-Solomon decoder to recover~$H(\bolda_j)$ ($\bolda_j$ defined in~\eqref{equation:aj}),~$j\in [0,J]$,  from~$(H(\bolda'_0),$ $\ldots,H(\bolda'_J),$ $Hash_k(\boldc))$.
    \item \textbf{Step 4:} Let~$\boldb_j=(d_{t_j+1},d_{t_{j}+2},\ldots,d_{t_{j+1}-k-1})$, recover~$\bolda_j$ by using~$\boldb_j$ and~$H(\bolda_j)$. Then recover~$\boldc$ from~$\bolda_j$,~$j\in[0,J]$.
\end{enumerate}
Since~$\boldc_{t_j}=\1_{sync}(\boldc)_{t_j}=1$ for~$j\in[1,J]$, it suffices to show that~$\bolda_j$,~$j\in[0,J]$ can be recovered
correctly. Furthermore, note that $(d_{t_{j}+1},\ldots,d_{t_{j+1}-k-1})$ is a length~$|\bolda_j|-k$ subsequence of~$\bolda_j$, hence ~$\bolda_j$ can be correctly found given~$\boldb_j$ and~$H(\bolda_j)$,~$j\in [0.J]$.  
It is then left to recover~$H(\bolda_j)$ for~$j\in [0,J]$. To this end, we show that there are at most~$2k$ indices~$j$, such that~$\bolda'_j\ne \bolda_j$. Then there are at most~$2k$ symbol errors in the sequence~$(H(\bolda_0),\ldots,H(\bolda_J))$, which can be corrected given the Reed-Solomon code redundancy~$Hash_k(\boldc)$. 

Let~$i_j$,~$j\in[1,J]$ be the index of~$\boldc_{t_j}$ in~$\boldd$ after deletions, where~$i_j=-1$ if~$\boldc_{t_j}$ is deleted. Let~$i_0=0$ and~$i_{J+1}=n+1-k$.
The interval~$[i_j,i_{j+1}]$,~$j\in [0.J]$ is called \emph{good} if~$\1_{sync}(\boldd)_{i_j}=\1_{sync}(\boldd)_{i_{j+1}}=1$ and~$i_{j+1}-i_j=t_{j+1}-t_{j}$. Note that by definition,~$\1_{sync}(\boldd)_{i_0}=\1_{sync}(\boldd)_{i_{J+1}}=1$. Since~$\boldd_{i_j}=\boldc_{t_j}$ and~$i_j\in [t_j-k,t_j]$ for~$j\in[0,J]$, we have that~$\bolda'_{j}=\bolda_j$ if the interval~$[i_j,i_{j+1}]$ is~\emph{good}. 
Now we show that a deletion can destroy at most~$2$ \emph{good} intervals. 
Suppose that the deletion occurs in interval~$[t_j,t_{j+1}]$ in~$\boldc$ for some~$j$. 
If the deletion turns the~$1$ entry~$\1_{sync}(\boldc)_{t_j}$ to~$0$, then the~$1$ entry~$\1_{sync}(\boldc)_{t_{j+1}}$ stays. As a result, at most two \emph{good} intervals~$[i_{j-1},i_j]$ and~$[i_{j},i_{j+1}]$ are destroyed. Similarly, if the deletion turns~$\1_{sync}(\boldc)_{t_{j+1}}$ to~$0$, then at most two \emph{good} intervals~$[i_j,i_{j+1}]$ and~$[i_{j+1},i_{j+2}]$ are destroyed. If both~$1$ entries~$\1_{sync}(\boldc)_{t_j}$ and~$\1_{sync}(\boldc)_{t_{j+1}}$ are not affected, then the deletion destroys only one \emph{good} interval~$[i_{j},i_{j+1}]$. In conclusion, a deletion affects at most two~\emph{good} intervals and thus at most~$2k$ block errors~$\bolda'_j\ne\bolda_j$ occur.
Therefore, the sequence~$\boldc$ can be recovered and this proves Lemma~\ref{lemma:recoverforkdense}.

\section{Transformation to~\emph{$k$ dense} sequences}\label{section:transformation}
In this section we present an algorithm to compute the map~$T(\boldc)$ (see Lemma~\ref{lemma:transformation}), which transforms any sequence~$\boldc\in\{0,1\}^n$ into a~\emph{$k$ dense} sequence, thus proving Lemma~\ref{lemma:transformation}. 
Let~$\boldsymbol{1}^{x}$ and~$\boldsymbol{0}^y$ denote sequences of consecutive~$x$~$1$'s and consecutive~$y$~$0$'s respectively. 
We will show in Proposition~\ref{lemma:property1and2} that any sequence~$\boldc$ satisfying the following two properties is a~\emph{$k$ dense} sequence. Then, algorithms will be presented in Subsection~\ref{subsection:Tfunction} to generate sequences that satisfy the two properties and thus 
is~\emph{$k$ dense}. 
\begin{itemize}
    \item[] \textbf{Property~$1$:}
    Every length~$B\triangleq (\lceil \log k\rceil+5)2^{\lceil \log k\rceil+9}\lceil \log n \rceil$ interval of~$\boldc$ contains the pattern~$\boldsymbol{1}^{\lceil \log k\rceil+5}$, i.e., for any integer~$i\in [1,n-B+1]$, there exists an integer~$j\in [i,i+B-\lceil \log k\rceil-5]$ such that~$(c_{j},c_{j+1},\ldots,c_{j+\lceil \log k\rceil+4})= \boldsymbol{1}^{\lceil \log k\rceil+5}$.  
\item[] \textbf{property~$2$:} Every length~$R\triangleq (3k+\lceil \log k\rceil+4)(\lceil \log n\rceil +9+\lceil\log k\rceil)$ interval of~$\boldc$ contains a length~$3k+\lceil \log k\rceil+4$ subinterval that does not contain the pattern~$\boldsymbol{1}^{\lceil \log k\rceil+5}$, i.e.,   
for any integer~$i\in [1,n-R+1]$, there exists an integer~$j\in [i,i+R-3k-\lceil \log k\rceil-4]$, such that~$(c_{m},c_{m+1},\ldots,c_{m+\lceil\log k\rceil+4})\ne \boldsymbol{1}^{\lceil\log k\rceil+5}$ for every~$m\in [j,j+3k-1]$. 

\end{itemize}
\begin{proposition}\label{lemma:property1and2}
If a sequence~$\boldc$ satisfies Property~$1$ and Property~$2$, then it is a~\emph{$k$ dense} sequence.
\end{proposition}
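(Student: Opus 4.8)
The plan is to show that Property~1 and Property~2 together force the $0$-runs in $\1_{sync}(\boldc)$ to have length at most $L$, which is exactly the definition of $k$-\emph{dense}. Recall that $\1_{sync}(\boldc)_i=1$ precisely when $(c_{i-3k+1},\ldots,c_{i+\lceil\log k\rceil+4})$ is a synchronization pattern, i.e. it ends in $\lceil\log k\rceil+5$ consecutive $1$'s and contains no other $1$-run of that length in its first $3k-1$ positions. So a $0$-run in $\1_{sync}(\boldc)$ of length $\ell$ means there is a window of roughly $\ell$ consecutive indices none of which is the right endpoint of a synchronization pattern, and I must bound $\ell$.

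First I would translate "index $i$ is a $1$ in $\1_{sync}(\boldc)$" into a local statement about $\boldc$: it says the length-$(\lceil\log k\rceil+5)$ block ending at position $i+\lceil\log k\rceil+4$ is all $1$'s, and among the $3k$ blocks of that length sitting in positions $[i-3k+1,\, i+\lceil\log k\rceil+4]$ only the last is all-$1$. Equivalently, reading $\boldc$ left to right, $\1_{sync}(\boldc)$ has a $1$ at the position that marks the first occurrence of the pattern $\boldsymbol{1}^{\lceil\log k\rceil+5}$ after a gap of at least $3k-1$ positions free of that pattern. Property~1 guarantees occurrences of $\boldsymbol{1}^{\lceil\log k\rceil+5}$ are never more than $B$ apart; Property~2 guarantees that in every window of length $R$ there is a length-$(3k+\lceil\log k\rceil+4)$ subinterval with no occurrence of $\boldsymbol{1}^{\lceil\log k\rceil+5}$ at all — in particular a genuine "reset" of $3k$ pattern-free starting positions, which is what is needed for the next pattern occurrence to trigger a $1$ in $\1_{sync}$.

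The main step is then a counting/pigeonhole argument: take any interval $I$ of $\boldc$ of length $L$ and show $\1_{sync}(\boldc)$ has a $1$ somewhere in (a slightly shrunk copy of) $I$. Partition $I$ into consecutive blocks of length $R$; within each such block Property~2 hands me a pattern-free stretch of length $3k+\lceil\log k\rceil+4$, and then Property~1, applied to the $B$-window immediately following that pattern-free stretch, produces an occurrence of $\boldsymbol{1}^{\lceil\log k\rceil+5}$. I would argue that the \emph{first} occurrence of $\boldsymbol{1}^{\lceil\log k\rceil+5}$ after the pattern-free stretch has a right endpoint $i$ for which $(c_{i-3k+1},\ldots,c_{i+\lceil\log k\rceil+4})$ is a synchronization pattern: the "ends in $\lceil\log k\rceil+5$ ones" part is clear by construction, and the "no earlier $1$-run of that length in the preceding $3k-1$ positions" part follows because those positions lie inside the pattern-free stretch (here is where I must check the arithmetic lines up, i.e. that the pattern-free stretch is long enough — length $3k+\lceil\log k\rceil+4$ — to cover all $3k$ of the relevant starting positions before $i$). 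Hence that index contributes a $1$ to $\1_{sync}(\boldc)$ inside $I$, so no $0$-run of $\1_{sync}(\boldc)$ can be as long as $L = (\lceil\log k\rceil+5)2^{\lceil\log k\rceil+9}\lceil\log n\rceil + (3k+\lceil\log k\rceil+4)(\lceil\log n\rceil+9+\lceil\log k\rceil) = B + R$ (up to the off-by-boundary bookkeeping), and $\boldc$ is $k$-dense.

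The hard part will be the boundary bookkeeping in that last step: being careful that the window indices $i-3k+1$ through $i+\lceil\log k\rceil+4$ all stay inside $I$ (so the witnessing $1$ really lands in the $0$-run we are trying to bound, not just near it), and that the pattern-free subinterval supplied by Property~2 is positioned so that it sits entirely among the $3k-1$ "forbidden" starting positions for the first post-stretch occurrence of $\boldsymbol{1}^{\lceil\log k\rceil+5}$. The precise length $L = B+R$ is evidently engineered so these fit with room to spare, so I expect this to be a matter of chasing indices rather than a genuine difficulty. Everything else is a direct unfolding of definitions plus one application each of Property~1 and Property~2.
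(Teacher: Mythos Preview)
Your plan is essentially the paper's proof: apply Property~2 once to obtain a pattern-free stretch of $3k$ starting positions beginning at some $j^*$, then apply Property~1 to find the \emph{first} occurrence $\ell$ of $\boldsymbol{1}^{\lceil\log k\rceil+5}$ at or beyond $j^*$, and conclude $\1_{sync}(\boldc)_\ell=1$, all within a window of length $R+B$.

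One point to tighten. Your stated justification that the $3k-1$ starting positions preceding $\ell$ ``lie inside the pattern-free stretch'' is not correct in general: if $\ell$ lands well past the stretch, some of those positions sit strictly between the stretch and $\ell$. The right reason they are pattern-free is \emph{minimality of $\ell$}: every position in $[j^*,\ell-1]$ fails to start $\boldsymbol{1}^{\lceil\log k\rceil+5}$ (those in $[j^*,j^*+3k-1]$ by Property~2, those beyond by choice of $\ell$), and the stretch forces $\ell\ge j^*+3k$, whence $[\ell-3k+1,\ell-1]\subseteq[j^*,\ell-1]$. This is exactly how the paper argues it. Also, there is no need to partition $I$ into several blocks of length~$R$; a single application of Property~2 at the left end of the window suffices, followed by one application of Property~1.
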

\begin{proof}
Let 
the locations of the~$1$ entries in~$\1_{sync}(\boldc)$ be~$t_1<\ldots<t_J$. Let~$t_0=0$ and~$t_{J+1}=n+1$. It suffices to show that~$t_{i+1}-t_i\le B+R+1=L+1$ for any~$i\in [0,J]$. 

According to Property~$2$, there exists an index~$j^*\in [t_i,t_i+R-3k-\lceil \log k\rceil-4]$, such that~$(c_{m},c_{m+1},\ldots,c_{m+\lceil\log k\rceil+4})\ne \boldsymbol{1}^{\lceil\log k\rceil+5}$ for every~$m\in [j^*,j^*+3k-1]$. 
According to Property~$1$, there exists an integer~$x\in [j^*+1,j^*+B]$ such that ~$(c_{x},c_{x+1},\ldots,c_{x+\lceil\log k\rceil+4})= \boldsymbol{1}^{\lceil\log k\rceil+5}$. 
Let~$\ell=\min_{x\ge j^*,(c_{x},c_{x+1},\ldots,c_{x+\lceil\log k\rceil+4})= \boldsymbol{1}^{\lceil\log k\rceil+5}}x$. Then we have that~$\ell\in [j^*+1,j^*+B]$ and that~$(c_{m},c_{m+1},\ldots,c_{m+\lceil\log k\rceil+4})\ne \boldsymbol{1}^{\lceil\log k\rceil+5}$ for every~$m\in [j^*,\ell)$.
By definition of~$j^*$, we have that~$\ell-j^*\ge 3k$. Since~$c_{\ell},c_{\ell+1},\ldots,c_{\ell+\lceil\log k\rceil+4})= \boldsymbol{1}^{\lceil\log k\rceil+5}$, we have that~$\1_{sync}(\boldc)_{\ell}=1$. 
Therefore, we conclude that
\begin{align*}
t_{i+1}-t_i&\le \ell-t_i\\
&\le j^*+B-t_i\\
&\le R+B+1\\
&=L+1
\end{align*}
\end{proof}
The following lemma presents a function that outputs a sequence satisfying Property~$1$.
\begin{proposition}\label{lemma:frequentpattern}
For integers~$k$ and~$n>k$, there exists a map~$T_{1}:\{0,1\}^n\rightarrow \{0,1\}^{n+2\lceil \log k\rceil+10}$, computable in~$O(n^2k\log n\log^2 k)$ time, such that~$T_1(\boldc)$ satisfies Property~$1$. Moreover, either~$(T_1(\boldc)_{n+\lceil \log k\rceil+6},\ldots,T_1(\boldc)_{n+2\lceil \log k\rceil+10})=\boldsymbol{1}^{\lceil \log k\rceil+5}$ or
\newline
$(T_1(\boldc)_{n+\lceil \log k\rceil+5},\ldots,T_1(\boldc)_{n+2\lceil \log k\rceil+9})=\boldsymbol{1}^{\lceil \log k\rceil+5}$.
The sequence~$\boldc$ can be recovered from~$T_1(\boldc)$. 
\end{proposition}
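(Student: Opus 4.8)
The plan is to build $T_1(\boldc)$ by appending a short marker to $\boldc$ and then repeatedly ``re‑encoding'' bad windows onto the end of the string in a length‑preserving way, exploiting the entropy deficit of strings that avoid the run $1^{m}$; throughout I abbreviate $m\triangleq\lceil\log k\rceil+5$ and $B'\triangleq B-2m$. First I would record the counting estimate: the number of binary strings of length $\ell$ with no occurrence of $1^{m}$ is at most $2^{\ell}(1-2^{-m})^{\lfloor \ell/m\rfloor}$, since if we cut the word into $\lfloor \ell/m\rfloor$ disjoint length‑$m$ blocks, none of them may equal $1^{m}$. Taking $\ell=B'$ and using $B'/m\ge 2^{m+3}\lceil\log n\rceil$ — which is exactly why $B$ is chosen of order $2^{\lceil\log k\rceil}\lceil\log k\rceil\lceil\log n\rceil$ — this is at most $2^{B'}e^{-8\lceil\log n\rceil}<2^{B'}n^{-11}$. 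I would then fix an explicit polynomial‑time injection $E$ from the $1^{m}$‑avoiding words of length $B'$ into $\{0,1\}^{B'-10\lceil\log n\rceil}$ (ranking in the regular language of $1^{m}$‑avoiding words), together with its polynomial‑time inverse; the point is that $E$ frees $\Omega(\log n)$ bits, more than the $\lceil\log n\rceil+O(\log k)$ bits of bookkeeping needed below.

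Next I would define the encoder. Initialize $\boldy\leftarrow \boldc\,\|\,\boldt$ with $\boldt\triangleq 0^{m-1}1^{m}0$ of length $2m$, and let $a\leftarrow n+2m$ be the length of the current active prefix $y_1\cdots y_a$. While $a\ge B'$ and $y_1\cdots y_a$ contains a length‑$B'$ window without $1^{m}$, let $i$ be the start of the leftmost such window, set $W\triangleq(y_i,\dots,y_{i+B'-1})$, delete these $B'$ coordinates from $\boldy$, append at the end the block
\begin{align*}
R\;=\;1^{m}\,\|\,\mathrm{bin}(i)\,\|\,E(W)\,\|\,0^{f}\,\|\,1^{m},
\end{align*}
where $\mathrm{bin}(i)$ is a fixed $(\lceil\log n\rceil+2)$‑bit encoding of $i$ and $f\ge0$ is chosen so that $|R|=B'$, and set $a\leftarrow a-B'$. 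The output keeps length $n+2m=n+2\lceil\log k\rceil+10$. Since $a$ drops by $B'$ per round, there are at most $(n+2m)/B'+1$ rounds, each polynomial (scanning for the leftmost bad window and evaluating $E$), and a careful implementation would meet the claimed time bound.

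To verify Property~$1$ I would take any length‑$B$ window of the output and split into cases: if it lies in the final active prefix $[1,a]$ it contains a length‑$B'$ sub‑window, which by the loop's termination condition contains $1^{m}$; if it lies in the appended suffix (a concatenation of length‑$B'$ blocks, each starting and ending with $1^{m}$), then since $B=B'+2m$ it wholly contains the leading $1^{m}$ of some block; if it straddles the boundary at $a$, then either its active part has length $\ge B'$ (first case) or its appended part has length $>B-B'=2m>m$ and hence contains the leading $1^{m}$ of the first appended block. For invertibility I would peel from the end: while the last $m$ bits are $1^{m}$, the last $B'$ bits form a block $R$; read $i$, recover $W=E^{-1}(\cdot)$, delete those $B'$ bits and re‑insert $W$ at position $i$; otherwise stop and output the first $n$ bits. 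Since the forward process produces exactly these intermediate strings, every appended block ends in $1^{m}$ while $\boldt$ ends in $1^{m-1}0$, and each stored position field determines its un‑deletion, the peeling stops precisely when all appended blocks are removed and the string equals $\boldc\,\|\,\boldt$, recovering $\boldc$. Finally, if at least one block was appended the output ends with one, so $(T_1(\boldc)_{n+m+1},\dots,T_1(\boldc)_{n+2m})=1^{m}$; if no block was appended the output is $\boldc\,\|\,\boldt$ and, by the shape of $\boldt$, $(T_1(\boldc)_{n+m},\dots,T_1(\boldc)_{n+2m-1})=1^{m}$ — in either case one of the two stated suffix conditions holds.

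The hard part is the counting interplay: the whole construction rests on a $1^{m}$‑avoiding string of length $B'$ having entropy deficit strictly larger than $\lceil\log n\rceil$ plus a few $\log k$ terms, so that the deleted window, the position needed to undo the deletion, and the marker bits all refit into $B'$ bits — this is precisely what forces $B'/m$ to be a large multiple of $\lceil\log n\rceil$. The remaining work is careful combinatorial bookkeeping: designing the block format (lead and trail with $1^{m}$) and the marker $\boldt$ (trailing $1^{m-1}0$) so that the appended region itself satisfies Property~$1$, so that windows straddling the boundary are covered, and so that the decoder can locate block boundaries and detect ``no more blocks'' while spending zero extra redundancy.
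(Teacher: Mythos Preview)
Your proposal is correct and follows essentially the same approach as the paper: both exploit the entropy deficit of $1^{m}$-avoiding strings (your block count $2^{B'}(1-2^{-m})^{\lfloor B'/m\rfloor}$ is exactly the paper's computation leading to $\phi$) to compress a bad window by more than $\lceil\log n\rceil+O(\log k)$ bits, then re-append the compressed window together with its position and run-markers in a length-preserving way, iterating until no bad window remains. The differences are cosmetic: you scan for bad windows of length $B'=B-2m$ rather than $B$, you use the marker $0^{m-1}1^{m}0$ and blocks delimited by $1^{m}$ at both ends (terminating the decoder on ``last $m$ bits $\ne 1^{m}$''), whereas the paper uses the marker $1^{2m}$ and blocks ending in $0$ (terminating on ``last bit $=1$''); your choice of including $\boldt$ in the active prefix lets you avoid the paper's separate Step~3 edge case for windows overhanging the boundary.
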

\begin{proof}
We first show that
every sequence~$\boldb\in\{0,1\}^{B}$ containing no consecutive~$\lceil \log k\rceil+5$~$1$'s can be uniquely represented by a sequence~$\phi(\boldb)$ of length~$B-\lceil \log n\rceil-2\lceil \log k\rceil-12$.
Split~$\boldb$ into~$2^{\lceil \log k\rceil+9}\lceil \log n \rceil$ blocks of length~$(\lceil \log k\rceil+5)$. Since each block is not~$\boldsymbol{1}^{\lceil \log k\rceil+5}$, it can be represented by a symbol of alphabet size~$2^{\lceil \log k\rceil+5}-1$. Therefore, the sequence~$\boldb$ can be uniquely represented by a sequence~$\boldv$ of~$2^{\lceil \log k\rceil+9}\lceil \log n \rceil$ symbols, each having alphabet size~$2^{\lceil \log k\rceil+5}-1$. Convert~$\boldv$ into a binary sequence~$\phi (\boldb)$. Then $\phi(\boldb)$ can be represented by bits with length 
\begin{align}\label{equation:lengthofphib}
D\triangleq & \log_2 (2^{\lceil \log k\rceil+5}-1)^{2^{\lceil \log k\rceil+9}\lceil \log n \rceil} \nonumber\\
= &\log_2 (1-1/2^{\lceil \log k\rceil+5})^{2^{\lceil \log k\rceil+9}\lceil \log n \rceil} + (\lceil \log k\rceil+5)2^{\lceil \log k\rceil+9}\lceil \log n \rceil\nonumber\\
= &16\log_2 (1-1/2^{\lceil \log k\rceil+5})^{2^{\lceil \log k\rceil+5}} \lceil \log n \rceil+B \nonumber\\
\overset{(a)}{\le}&-16\log_2 e \lceil \log n \rceil+B\nonumber\\
\le&B-16\lceil \log n\rceil\nonumber\\
\le&B-\lceil \log n\rceil-2\lceil \log k\rceil-12,
\end{align}
where~$(a)$ follows from the fact that the function~$(1-1/x)^x$ is increasing in~$x$ for~$x>1$ and that~$\lim_{x\rightarrow \infty}(1-1/x)^x=1/e$.
Therefore,~$\phi(\boldb)$ can be represented by~$B-\lceil \log n\rceil-2\lceil \log k\rceil-12$ bits.

For a sequence~$\boldc\in\{0,1\}^n$, the encoding procedure for computing~$T_1(\boldc)$ is as follows.
\begin{enumerate}
    \item \textbf{Initialization:} Let~$T_1(\boldc)=\boldc$.
    Append~$\boldsymbol{1}^{2\lceil \log k\rceil+10}$ to the end of the sequence~$T_1(\boldc)$. Let~$i=1$ and~$n'=n$. Go to Step~$1$.
    \item \textbf{Step 1:} If~$(c_{j},c_{j+1},\ldots,c_{j+\lceil \log k\rceil+4})\ne \boldsymbol{1}^{\lceil \log k\rceil+5}$ for every~$j\in [i,i+B-\lceil \log k\rceil-5]$, go to Step 2. Else go to Step~$4$.
    \item \textbf{Step 2:} If~$i\le n'-B+1$,
    delete~$(T_1(\boldc)_{i},\ldots,$ $T_1(\boldc)_{i+B-1})$ from~$T_1(\boldc)$ and  append~$(i,\phi(T_1(\boldc)_{i},\ldots,T_1(\boldc)_{i+B-1}),0,$ $\boldsymbol{1}^{2\lceil \log k\rceil+10},0)$ to the end of~$T_1(\boldc)$. 
    Let~$n'=n'-B$ and~$i=1$. Go to Step~$1$.
    Else go to Step~$3$.
    \item \textbf{Step 3:} Delete~$(T_1(\boldc)_{i},\ldots,$ $T_1(\boldc)_{n'})$ from~$T_1(\boldc)$ and  append~$(i,\phi(T_1(\boldc)_{i},\ldots,T_1(\boldc)_{n'},\boldsymbol{0}^{i+B-n'-1}),0,$
    
    $\boldsymbol{1}^{2\lceil \log k\rceil+10-(i+B-n'-1)},0)$ to the end of~$T_1(\boldc)$. 
    Let~$n'=i-1$ and~$i=1$. Go to Step~$1$.
    \item \textbf{Step 4:}
    If~$i\le n'$, let~$i=i+1$ and go to Step~$1$.
    Else output~$T_1(\boldc)$. 
\end{enumerate}
It can be seen that the length of~$T_1(\boldc)$ keeps constant and is~$n+2\lceil \log k\rceil+10$.
The integer~$n'$ is defined such that~$(T_1(\boldc)_{n'+1},\ldots,$ $ T_1(\boldc)_{n+2\lceil \log k\rceil+10})$ are the appended bits and~$(T_1(\boldc)_{1},\ldots,T_1(\boldc)_{n'})$ are the remaining bits in~$\boldc$ after the deleting operations in Step~$2$ and Step~$3$. The appended bits are not deleted in the procedure. Note that~$(T_1(\boldc)_{n'+1},\ldots,$ $ T_1(\boldc)_{n'+\lceil \log k\rceil+5})=\boldsymbol{1}^{\lceil \log k\rceil+5}$. Hence in Step~$3$, the integer~$i$ satisfies~$1\le i+B-n'-1\le \lceil \log k\rceil+4$ and the appended bits~$\boldsymbol{1}^{2\lceil \log k\rceil+10-(i+B-n'-1)}$ has length at least~$\lceil \log k\rceil+6$.

Since either~$i$ increases from~$1$ to~$n'$ or~$n'$ decreases by~$B$ in each step.  The algorithm terminates within~$O(n^2)$ times of Step~$1$, Step~$2$ and Step~$3$. Since it takes~$O(B\log k)$ to check if a subsequence is an all~$1$'s vector. The total complexity is~$O(n^2k\log n\log^2 k)$.

We now show that the output sequence~$T_1(\boldc)$ satisfies Property~$1$.
Note that for any~$i\in [1,n']$, there exists some~$j\in [i,i+B-\lceil \log k\rceil-5]$ such that~$T_1(\boldc)_{j}=T_1(\boldc)_{j+1}=\ldots=T_1(\boldc)_{j+\lceil \log k \rceil+4}=1$. Otherwise~$T_1(\boldc)_i$ is deleted in Step~$2$ or Step~$3$. 
Moreover, the distance between two consecutive patterns~$\boldsymbol{1}^{\lceil \log k\rceil+5}$ after position~$n'$ is at most~$B-\lceil \log k\rceil-5$.   
Hence the sequence~$T_1(\boldc)$ satisfies Property~$1$. 

It can be seen that when the inserting operation in Step~$2$ or Step~$3$ does not occur, the we have that~$(T_1(\boldc)_{n+\lceil \log k\rceil+6},\ldots,$ $T_1(\boldc)_{n+2\lceil \log k\rceil+10})=\boldsymbol{1}^{\lceil \log k\rceil+5}$. If the insertion operation in Step~$2$ or Step~$3$ occurs, then we have that~$(T_1(\boldc)_{n+\lceil \log k\rceil+5},\ldots,$ $T_1(\boldc)_{n+2\lceil \log k\rceil+9})=\boldsymbol{1}^{\lceil \log k\rceil+5}$.

We now give the following decoding procedure that recovers~$\boldc$ from~$T_1(\boldc)$.
\begin{enumerate}
    \item \textbf{Initialization:} Let~$\boldc=T_1(\boldc)$ and go to Step~$1$. 
    \item \textbf{Step 1:} If~$c_{n+2\lceil \log k\rceil+10}=0$, find the length~$\ell$ of the~$1$ run that ends with~$c_{n+2\lceil \log k\rceil+9}$. let~$i$ be the decimal representation of $(c_{n+4\lceil \log k\rceil+21-B-\ell},c_{n+4\lceil \log k\rceil+22-B-\ell},\ldots,$ 
    $c_{n+4\lceil \log k\rceil+20-B-\ell+\lceil \log n\rceil})$. 
    Let~$\boldb$ be the sequence obtained by computing $\phi^{-1}(c_{n+4\lceil \log k\rceil+21-B-\ell+\lceil \log n\rceil},$ 
    $c_{n+4\lceil \log k\rceil+22-B-\ell+\lceil \log n\rceil},\ldots,c_{n+2\lceil \log k\rceil+8-\ell})$, where the function~$\phi$ is defined in the paragraph before Eq.~\eqref{equation:lengthofphib} and is invertible,
    Note that the length of~$\boldb$ is~$B$.
    Delete~$(c_{n+4\lceil \log k\rceil+21-B-\ell},$ $c_{n+4\lceil \log k\rceil+22-B-\ell},$ $\ldots,c_{n+2\lceil \log k\rceil+10})$ from~$\boldc$ and insert~$(\boldb_1,\ldots,\boldb_{B-2\lceil \log k\rceil-10+\ell})$ at location~$i$ of~$\boldc$.
    Repeat. 
    Else delete~$c_{n+1},\ldots,c_{n+2\lceil \log k\rceil+10}$ and
    output~$\boldc$.
\end{enumerate}
Note that in the encoding procedure, every appended subsequence has length~$B-2\lceil \log k\rceil-10+\ell$ and ends with a~$0$. Hence the algorithm stops when all the subsequences appended in Step~$2$ or Step~$3$ are deleted.
Moreover, the encoding procedure consists of a series of deleting and appending operations. The decoding procedure exactly reverses the series of operations in the encoding procedure. Specifically,
let~$T_{1,i}(\boldc)$,~$i\in [0,I]$ be the sequence~$T_1(\boldc)$ obtained after the~$i$-th deleting and appending operation in the encoding procedure, where~$I$ is the number of deleting and appending operations in total in the encoding procedure. We have that~$T_{1,0}(\boldc)=\boldc$ and that~$T_{1,I}(\boldc)$ is the final output~$T_{1}(\boldc)$. Then the decoding procedure obtains~$T_{1,I-i}(\boldc)$,~$i\in [0,I]$ after the~$i$-th deleting and inserting operation. Hence we get the output~$T_{1,I-I}(\boldc)=\boldc$ in the decoding procedure. 
\end{proof}
The following lemma shows that a small sequence containing~$\boldsymbol{1}^{\lceil \log k \rceil +5}$ patterns can be encoded into a sequence without the $\boldsymbol{1}^{\lceil \log k \rceil +5}$ pattern. The lemma will be used to generate sequence satisfying Property~$2$.
\begin{proposition}\label{lemma:3kto3kminus1}
For an integer~$k$,
let~$\boldc\in\{0,1\}^{3k+\lceil \log k\rceil+4}$ be a sequence 
such that $c_i=c_{i+1}=\ldots=c_{i+\lceil \log k \rceil +4}=1$ for some $i\in [1,3k]$.
There exists a mapping~$T_2:\{0,1\}^{3k+\lceil \log k\rceil+4}\rightarrow \{0,1\}^{3k+\lceil \log k\rceil+3}$, computable in~$O(k^2\log k)$ time, such that~$T_2(\boldc)$ contains no~$\lceil \log k \rceil +5$ consecutive~$1$ bits. In addition, the sequence~$\boldc$ can be recovered from~$T_2(\boldc)$.
\end{proposition}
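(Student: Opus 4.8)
The plan is to shrink $\boldc$ by one bit by repeatedly deleting blocks of $\lceil\log k\rceil+5$ consecutive ones and recording where they were. Write $r\triangleq\lceil\log k\rceil+5$ and $L_0\triangleq 3k+\lceil\log k\rceil+4=3k+r-1$ for the input length; the target length is $L_0-1$. First I would record a structural observation: if $p$ is the smallest index at which $\boldsymbol{1}^{r}$ occurs in a binary string $\boldx$, then $p=1$ or $x_{p-1}=0$, because $x_{p-1}=1$ would put a copy of $\boldsymbol{1}^{r}$ at $p-1$. Hence deleting the block $x_{p},\dots,x_{p+r-1}$ leaves the prefix $x_{1}\cdots x_{p-1}$ (which contains no $\boldsymbol{1}^{r}$) unchanged, leaves the positions $\ge p$ equal to a suffix of $\boldx$, and creates no new occurrence of $\boldsymbol{1}^{r}$ at the splice, since the bit just left of the splice is $0$ or absent; in particular the first occurrence of $\boldsymbol{1}^{r}$ in the new string, if any, starts at an index $\ge p$.

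The encoder iterates this. Starting from $\boldc$, which by hypothesis contains $\boldsymbol{1}^{r}$ at some position $i\le 3k$, it deletes the first occurrence of $\boldsymbol{1}^{r}$ and repeats; after $t\ge 1$ steps it reaches a string $\boldw$ with no $\boldsymbol{1}^{r}$ and length $L_0-tr$, and the deletion positions $p_1,\dots,p_t$. Each $p_j$ is at most $3k$: $p_1\le i\le 3k$, and for $j\ge 2$ the string being processed has length at most $L_0-r=3k-1$, so $p_j\le 3k-r$. Hence each $p_j$ can be written in $b\triangleq\lceil\log(3k)\rceil$ bits, and $b\le\lceil\log k\rceil+2=r-3$. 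I would then set $T_2(\boldc)=\boldw$ followed by a metadata string of length exactly $tr-1$ that encodes $(p_1,\dots,p_t)$, giving $|T_2(\boldc)|=(L_0-tr)+(tr-1)=L_0-1$. A concrete layout: write the fields left to right; for $j\ge2$ field $j$ is $\mathrm{enc}_b(p_j)\,0^{r-b}$ of length $r$, and field $1$ is $0\,\mathrm{enc}_b(p_1)\,0^{r-b-3}\,1$ of length $r-1$ (here $r-b\ge3$ by the inequality above, so these are well defined and sum to $tr-1$). The trailing $0^{r-b}$ of each long field keeps every run of ones inside or across fields below $r$; the leading $0$ of field $1$ kills any trailing ones-run of $\boldw$ (which has length $\le r-1$); and the final $1$ of field $1$ is a flag marking the last field. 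Thus $T_2(\boldc)$ contains no $r$ consecutive ones.

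Decoding reverses this: the decoder looks at the current rightmost bit; if it is $0$ it peels the last $r$ bits, reads the encoded position, inserts $\boldsymbol{1}^{r}$ there, and continues; if it is $1$ it peels the last $r-1$ bits, reads $p_1$, inserts the last $\boldsymbol{1}^{r}$, and halts, at which point the reconstructed string has length $L_0$ and equals $\boldc$. Because every re-inserted block goes to the far-left ($\boldw$-side) part while the metadata stays at the right, the flag bit always correctly tells the decoder whether the rightmost field is the last one. For complexity: there are at most $L_0$ deletion steps, each an $O(L_0 r)$ scan for the first $\boldsymbol{1}^{r}$, so $O(L_0^2 r)=O(k^2\log k)$ overall; decoding is of the same order.

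The part I expect to need the most care is simultaneously meeting all three constraints on the metadata — total length exactly $tr-1$, no run of $r$ ones anywhere in the output, and unambiguous self-delimiting parsing for the decoder without knowing $t$ in advance — and it is precisely the inequality $b\le r-3$ (which is tight for small $k$) that makes the field layout above possible. The remaining ingredients (the structural observation, the bound $p_j\le 3k$, the length bookkeeping, and reversibility) are routine.
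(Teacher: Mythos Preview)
Your approach is essentially the paper's: repeatedly delete the leftmost $\boldsymbol{1}^{r}$ block, append a fixed-width record of its position (index in $\lceil\log 3k\rceil$ bits, zero-padded), make the record for the very first deletion one bit shorter to achieve the net $-1$ in length, and distinguish that special record from the others by a terminal flag bit so the decoder can peel records right-to-left and stop at the correct moment; the paper uses the opposite flag convention (first record ends in $0$, later ones in $1$) and places the separating $0$ just before the metadata rather than as the first bit of field~$1$, but these are cosmetic. The one place your justification is thin is the field~$1$/field~$2$ boundary in the tight case $b=r-3$, where field~$1$ is $0\,\mathrm{enc}_b(p_1)\,1$ with no zero buffer before the flag; however, since $p_1\le 3k\le 3\cdot 2^{b-2}<2^{b}-1$ forces $\mathrm{enc}_b(p_1)$ to have at most $b-1$ trailing ones, and $p_2\le 3k-r<3\cdot 2^{b-2}$ forces $\mathrm{enc}_b(p_2)$ to have at most one leading one, the crossing run is at most $(b-1)+1+1=r-2<r$, so your layout does work.
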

\begin{proof}
Given~$\boldc\in\{0,1\}^{3k+\lceil \log k\rceil+4}$, the encoding procedure for computing~$T_2(\boldc)$ is given as follows.
\begin{enumerate}
    \item \textbf{Initialization:} Let~$T_2(\boldc)=\boldc$.
    Append~$0$ to the end of the sequence~$T_2(\boldc)$. Find the smallest~$i\in [1,3k]$ such that~$T_2(\boldc)_i=T_2(\boldc)_{i+1}=\ldots=T_2(\boldc)_{i+\lceil \log k \rceil +4}=1$. Delete~$(T_2(\boldc)_i,\ldots,T_2(\boldc)_{i+\lceil \log k \rceil +4})$ from~$T_2(\boldc)$ and append~$(i,\boldsymbol{0}^{\lceil \log k\rceil +3-\lceil \log (3k) \rceil})$ to the end of~$T_2(\boldc)$. Let~$n'=3k-1$ and~$i=1$.
    Go to Step~$1$.
    \item \textbf{Step 1:} If~$i\le n'-\lceil \log k \rceil-4$ and~$T_2(\boldc)_i=T_2(\boldc)_{i+1}=\ldots=T_2(\boldc)_{i+\lceil \log k \rceil +4}=1$, 
    delete~$(T_2(\boldc)_i,\ldots,T_2(\boldc)_{i+\lceil \log k \rceil +4})$ from~$T_2(\boldc)$ and  append~$(i,\boldsymbol{0}^{\lceil \log k \rceil+5-\lceil \log (3k) \rceil},1)$ to the end of~$T_2(\boldc)$. Let~$n'=n'-\lceil \log k\rceil -5$ and~$i=1$. Repeat.
    Else go to Step~$2$.
    \item \textbf{Step 2:}
    If~$i\le n'$, let~$i=i+1$ and go to Step~$1$.
    Else output~$T_2(\boldc)$. 
\end{enumerate}
The length of the sequence~$T_2(\boldc)$ keeps constant and is
\begin{align*}
    3k+\lceil \log k\rceil+4+1-\lceil \log k \rceil -5+\lceil \log k\rceil+3=3k+\lceil \log k\rceil+3.
\end{align*}
The integer~$n'$ is defined such that~$(T_2(\boldc)_{n'+1},\ldots,T_2(\boldc)_{3k+\lceil \log k\rceil+3})$ are appended bits and~$(T_2(\boldc)_{1},$ $\ldots,T_2(\boldc)_{n'})$ are the remaining bits in~$T_2(\boldc)$ after deleting~$(T_2(\boldc)_i,\ldots,T_2(\boldc)_{i+\lceil \log k \rceil +4})$. Note that~$T_2(\boldc)_{n'+1}=0$. Hence if~$(T_2(\boldc)_i,\ldots,$ $T_2(\boldc)_{i+\lceil \log k\rceil+4})=\boldsymbol{1}^{\lceil \log k\rceil+5}$, then~$i+\lceil \log k\rceil+4\le n'$ and thus~$T_2(\boldc)_{n'+1}$ is not deleted.

Since either~$i$ increases from~$1$ to~$n'$ or~$n'$ decreases in each step. The algorithm terminates within~$O(k^2)$ times of Step~$1$ and Step~$2$. Since it takes~$O(\log k)$ to check if a subsequence is an all~$1$'s vector. The total complexity is~$O(k^2\log k)$.

We now show that~$T_2(\boldc)$ contains no~$\boldsymbol{1}^{\lceil \log k\rceil+5}$ patterns. According to the encoding procedure, for~$i\in [1,n']$, we have that~$(T_2(\boldc)_i,\ldots,T_2(\boldc)_{i+\lceil \log k\rceil+4})\ne\boldsymbol{1}^{\lceil \log k\rceil+5}$. Furthermore, the distance between two consecutive~$0$ entries after position~$n'$ is at most~$\lceil \log 3k\rceil+1$. Hence for any~$i\in [n'+1,3k-1]$, we have that
~$(T_2(\boldc)_i,\ldots,T_2(\boldc)_{i+\lceil \log k\rceil+4})\ne\boldsymbol{1}^{\lceil \log k\rceil+5}$. Therefore,~$T_2(\boldc)$ does not contain~$\boldsymbol{1}^{\lceil \log k\rceil+5}$ patterns.

To show that~$T_2(\boldc)$ is decodable,
we present the following procedure that recovers~$\boldc$ from~$T_2(\boldc)$.
\begin{enumerate}
    \item \textbf{Initialization:} Let~$\boldc=T_2(\boldc)$ and go to Step~$1$. 
    \item \textbf{Step 1:} If~$c_{3k+\lceil \log k\rceil+3}=1$, let~$i$ be the decimal representation of~$(c_{3k-1},c_{3k},\ldots,c_{3k+\lceil \log 3k \rceil-2})$. 
    Delete~$(c_{3k-1},c_{3k},$ $\ldots,c_{3k+\lceil \log k\rceil+3})$ from~$\boldc$ and insert~$\boldsymbol{1}^{\lceil \log k \rceil+5}$ at location~$i$ of~$\boldc$.
    Repeat. 
    Else go to Step~$2$.
    \item \textbf{Step 2:}
    Let~$i$ be the decimal representation of~$(c_{3k+1},c_{3k+2},\ldots,c_{3k+\lceil \log (3k) \rceil})$. 
    Delete~$(c_{3k},c_{3k+2},\ldots,$ $c_{3k+\lceil \log k\rceil+3})$ from~$\boldc$ and insert~$\boldsymbol{1}^{\lceil \log k \rceil+5}$ at location~$i$ of~$\boldc$.    
    Output~$\boldc$.
\end{enumerate}
Note that in the encoding procedure, the appended subsequence in the Initialization Step ends with a~$0$. The appended subsequence in Step~$1$ ends with a~$1$. Hence the algorithm stops when~$\boldc_{3k+\lceil \log k\rceil+3}=0$ and the subsequence~$(c_{3k},c_{3k+2},\ldots,$ $c_{3k+\lceil \log k\rceil+3})$ is deleted.

Similarly to the proof of correctness of decoding in Proposition~\ref{lemma:frequentpattern}, the decoding procedure exactly reverses the series of operations in the encoding procedure.
Let $T_{2,i}(\boldc)$,~$i\in [0,I]$ be the sequence obtained after the~$i$-th deleting and appending operation in the encoding procedure, where
$I$ is the number of deleting and appending operations in total in the encoding procedure.
Then~$T_{2,i}(\boldc)$ is the sequence obtained after the~$I-i$-th deleting and inserting operation in the decoding procedure. Therefore, the decoding procedure recovers the sequence~$\boldc$ after the~$I$-th operation. 
\end{proof}
\subsection{Proof of Lemma~\ref{lemma:transformation}}\label{subsection:Tfunction}
We are now ready to present the encoding and decoding procedure for computing~$T(\boldc)$.
The encoding procedure is as follows.
\begin{enumerate}
    \item \textbf{Initialization:} Let~$T(\boldc)=T_1(\boldc)$.
    Append~$(\boldsymbol{0}^{3k},\boldsymbol{1}^{\lceil \log k \rceil +5})$ to the end of the sequence~$T(\boldc)$. Let~$n'=n+2\lceil\log k\rceil+10$ (the length of~$T_1(\boldc)$) and~$i=1$.
    Go to Step~$1$.
    \item \textbf{Step 1:} If~$i\le \min\{n',n+3k+3\lceil\log k\rceil+16-R\}$ and
    for every~$j\in [i,i+R-3k-\lceil \log k\rceil-4]$, there exists~$m\in [j,j+3k-1]$ such that~$(T(\boldc)_m,T(\boldc)_{m+1},\ldots,$ $T(\boldc)_{m+\lceil \log k \rceil +4})=\boldsymbol{1}^{\lceil \log k \rceil +5}$, split~$(T(\boldc)_i,T(\boldc)_{i+1},\ldots,T(\boldc)_{i+R-1})$ into~$(\lceil \log n\rceil +9+\lceil\log k\rceil)$ blocks~$\boldb_1,\boldb_2,\ldots,\boldb_{\lceil \log n\rceil +9+\lceil\log k\rceil}$ of length~$3k+\lceil \log k \rceil +4$. Delete~$(\boldb_2,\ldots,\boldb_{\lceil \log n\rceil +8+\lceil\log k\rceil})$ from~$T(\boldc)$ and append~$(0,T_2(\boldb_2),T_2(\boldb_3),\ldots,T_2(\boldb_{\lceil \log n\rceil+8+\lceil\log k\rceil}),i+3k+\lceil\log k\rceil+4,\boldsymbol{1}^{\lceil\log k\rceil+5},0)$ to the end of
    ~$T(\boldc)$. Let~$n'=n'-R+6k+2\lceil\log k\rceil+8$ and~$i=1$. Repeat.
    Else go to Step~$2$.
    \item \textbf{Step 2:}
    If~$i\le n'$, let~$i=i+1$ and go to Step~$1$.
    Else output~$T(\boldc)$. 
\end{enumerate}
The length of~$T(\boldc)$ keeps constant and is~$n+3k+3\lceil\log k\rceil+15$. 
The subsequence~$T(\boldc)_{n'+1},\ldots,$ $T(\boldc)_{n+3k+3\lceil\log k\rceil+15}$ are appended bits. 
The subsequence~$T(\boldc)_1,\ldots,T(\boldc)_{n'}$ consists of the remaining bits in~$T(\boldc)$ after deletions in Step~$1$. Note that~$(T(\boldc)_{n'+1},\ldots,T(\boldc)_{n+3k+3\lceil\log k\rceil+15})$ is not deleted.

We show that~$T(\boldc)_{n'+1},\ldots,T(\boldc)_{n'+3k}=\boldsymbol{0}^{3k}$ and that either~$T(\boldc)_{n'-\lceil\log k\rceil-4},\ldots,$ $T(\boldc)_{n'}=\boldsymbol{1}^{\lceil\log k\rceil+5}$ or~$T(\boldc)_{n'-\lceil\log k\rceil-5},\ldots,$ $T(\boldc)_{n'-1}=\boldsymbol{1}^{\lceil\log k\rceil+5}$. 
The proof is based on induction on the number~$r$ of deleting operations in Step~$1$ that have been done.
According to Proposition~\ref{lemma:frequentpattern}, the claim holds after the Initialization Step and hence after~$r=0$ deleting operations. Suppose the claim holds after~$r$ deleting operations and we have that~$(T(\boldc)_{n'+1},\ldots,T(\boldc)_{n'+3k})=\boldsymbol{0}^{3k}$. Then in the~$r+1$-deleting operation in Step~$1$, the "if" condition only holds when~$i\le n'-R$. Otherwise, for~$j=n'\in [i,i+R-1]$, we have that~$(c_m,c_{m+1},\ldots,$ $c_{m+\lceil \log k \rceil +4})\ne\boldsymbol{1}^{\lceil \log k \rceil +5}$ for every~$m\in [j,j+3k-1]$, contradicting to the "if" condition. Furthermore, note that the bits~$(T(\boldc)_{i+R-3k-\lceil\log k\rceil-4},\ldots,T(\boldc)_{n+3k+3\lceil\log k\rceil+15})$ after block~$\boldb_{\lceil \log n\rceil+9+\lceil\log k\rceil}$ are not deleted in the deleting operation.
Hence the bits~$T(\boldc)_{n'-\lceil\log k\rceil-5},\ldots,T(\boldc)_{n+3k+3\lceil\log k\rceil+15}$ stay in~$T(\boldc)$. Therefore, the claim holds after~$r+1$-th deleting operation and holds after any number of deleting operations. 

We now show that~$T(\boldc)$ satisfies Property~$1$.
From Proposition~\ref{lemma:frequentpattern}, the initial sequence~$T(\boldc)=(T_1(\boldc),\boldsymbol{0}^{3k},\boldsymbol{1}^{\lceil\log k\rceil+5})$ satisfies Property~$1$. 
We prove that~$T(\boldc)$ keeps Property~$1$ after the deleting and inserting operations. 
Note that the deleting operation does not delete~$\boldb_1$ and~$\boldb_{\lceil \log n\rceil +9+\lceil\log k\rceil}$, which both contain~$\boldsymbol{1}^{\lceil\log k\rceil+5}$ as a subsequence. Hence the deleting operation do not change the distance of consecutive~$\boldsymbol{1}^{\lceil\log k\rceil+5}$ patterns before block~$\boldb_1$ and after~$\boldb_{\lceil \log n\rceil +9+\lceil\log k\rceil}$. Since the distance of consecutive~$\boldsymbol{1}^{\lceil\log k\rceil+5}$ patterns in~$(\boldb_1,\boldb_{\lceil \log n\rceil +9+\lceil\log k\rceil})$ is at most~$6k+2\lceil\log k\rceil+8\le B-\lceil\log k\rceil-5$, the sequence~$T(\boldc)$ keeps Property~$1$ after a deleting operation. Moreover, the distance of consecutive~$\boldsymbol{1}^{\lceil\log k\rceil+5}$ patterns in the
appended bits~$(T(\boldc)_{n'+1},\ldots,T(\boldc)_{n+3k+3\lceil\log k\rceil+15})$ is at most~$R-6k-2\lceil\log k\rceil-8\le B-\lceil\log k\rceil-5$. Note that the appended bits are not deleted in the deleting operations. Hence~$T(\boldc)$ keeps Property~$1$ after an inserting operation and we conclude that the output~$T(\boldc)$ satisfies Property~$1$. 

Next, we prove that~$T(\boldc)$ satisfies Property~$2$. According to the encoding procedure, for any~$i\in [1,\min\{n',n+3k+3\lceil\log k\rceil+16-R\}]$, there exists some~$j\in[i,i+R-3k-\lceil \log k\rceil-4]$, such that~$(T(\boldc)_m,T(\boldc)_{m+1},\ldots,$ $T(\boldc)_{m+\lceil \log k \rceil +4})\ne\boldsymbol{1}^{\lceil \log k \rceil +5}$ for every~$m\in [j,j+3k-1]$.  
Note that the appended bits~$(T(\boldc)_{n'+1},\ldots,T(\boldc)_{n+3k+3\lceil\log k\rceil+15})$ are not deleted. Hence for~$i\in [n'+1,n+3k+3\lceil\log k\rceil+16-R]$, the interval~$[i,i+R-1]$ contains a subinterval~$[j,j+3k+\lceil\log k\rceil+3]$ such that~$T(\boldc)_{j}=0$ and~$(T(\boldc)_{j+1},\ldots,T(\boldc)_{j+3k+\lceil\log k\rceil+3})=T_2(\boldb_2)$, where~$\boldb_2\in\{0,1\}^{3k+\lceil\log k\rceil+4}$ contains the~$\boldsymbol{1}^{\lceil\log k\rceil+5}$ pattern. Then~$(T(\boldc)_m,$ $T(\boldc)_{m+1},\ldots,$ $T(\boldc)_{m+\lceil \log k \rceil +4})\ne\boldsymbol{1}^{\lceil \log k \rceil +5}$ for every~$m\in [j,j+3k-1]$. Therefore, the sequence~$T(\boldc)$ satisfies Property~$2$. According to Proposition~\ref{lemma:property1and2}, we conclude that~$T(\boldc)$ satisfies Property~$1$ and Property~$2$ and is~\emph{$k$ dense}.

Since either~$i$ increases from~$1$ to~$n'$ or~$n'$ decreases in the encoding procedure, the procedure terminates within~$O(n^2)$ iterations. Each iteration takes~$O(k\log kR)$ time to check a violation of Property~$2$. Hence the complexity is at most~$O(n^2k^2\log k(\log n))$. Therefore, the total complexity is~$poly(n,k)$.

Finally we present the following decoding procedure that recovers~$\boldc$ from~$T(\boldc)$.
\begin{enumerate}
    \item \textbf{Initialization:} Let~$\boldc=T(\boldc)$ and go to Step~$1$. 
    \item \textbf{Step 1:} If~$c_{n+3k+3\lceil\log k\rceil+15}=0$, let~$i$ be the decimal representation of~$(c_{n+3k+2\lceil\log k\rceil+10-\lceil \log n\rceil},$
    \newline
    $c_{n+3k+2\lceil\log k\rceil+11-\lceil \log n\rceil},\ldots,c_{n+3k+2\lceil\log k\rceil+9})$. Split~$(c_{n+9k+5\lceil\log k\rceil+25-R},\ldots,c_{n+3k+2\lceil\log k\rceil+9-\lceil \log n\rceil})$ into 
    \newline
    $\lceil \log n\rceil+7+\lceil\log k\rceil$ blocks~$(\boldb'_1,\ldots,\boldb'_{\lceil \log n\rceil+7+\lceil\log k\rceil})$ of length~$3k+\lceil\log k\rceil+3$. Compute~$\boldb_j=T^{-1}_2(\boldb'_j)$ for~$j\in [1,\lceil \log n\rceil+7+\lceil\log k\rceil]$, where~$T^{-1}_2(\boldb'_j)$ is obtained by applying~$T_2$ decoder (Proposition~\ref{lemma:3kto3kminus1}) on~$\boldb'_j$.
    Delete \newline 
    $(c_{n+9k+5\lceil\log k\rceil+24-R},\ldots,$ $c_{n+3k+3\lceil\log k\rceil+15})$ from~$\boldc$ and insert~$\boldb_1,\ldots,\boldb_{\lceil \log n\rceil+7+\lceil\log k\rceil}$ at location~$i$ of~$\boldc$.
    Repeat. 
    Else delete~$(c_{n+2\lceil\log k\rceil+11},\ldots,c_{n+3k+3\lceil\log k\rceil+15})$ and output~$T^{-1}_1(\boldc)$.
\end{enumerate}
According to the encoding procedure, the inserted bits end with a~$1$ entry in the Initialization Step and with a~$0$ entry in Step~$1$. Note that the inserted bits are not deleted in the encoding procedure. Hence the decoding algorithm stops when an ending~$1$ entry is detected.

Similarly to the 
to the proof of correctness of decoding in Proposition~\ref{lemma:frequentpattern} and Proposition~\ref{lemma:3kto3kminus1}, the decoding procedure exactly reverses the series of operations in the encoding procedure. Therefore, the decoding procedure decodes the sequence~$\boldc$ correctly. The proof is done.
\section{Encoding}\label{section:encoding}
In this section we present the encoding function~$\mathcal{E}$ and prove Theorem~\ref{theorem:main}.
The function~$\mathcal{E}$ is given by
\begin{align*}
    \mathcal{E}(\boldc) = (T(\boldc),R'(\boldc),R''(\boldc)).
\end{align*}
where 
\begin{align*}
   R'(\boldc)&=( M(f(\1_{sync}(T(\boldc)))) \bmod p(T(\boldc)),p(T(\boldc)),Hash_k(T(\boldc))), \text{ and }\\
   R''(\boldc)&=Rep_{k+1}(H(R'(\boldc))).
\end{align*}
Here~$M(\boldv)$ is the function defined in Eq.~\eqref{equation:Mfunction} and~$Rep_{k+1}(H(R'(\boldc)))$ is the~$k+1$-fold repetition of the bits in~$H(R'(\boldc))$ (See Lemma~\ref{lemma:hash} for definition of function~$H(\boldc)$). The function~$Hash_k(\boldc)$ is defined in Eq.~\eqref{equation:hashk} and the function~$T(\boldc)$ is defined in Lemma~\ref{lemma:transformation}.

The length of~$R'(\boldc)$ is~$N_1=8k\log (n+3k+3\lceil \log k\rceil +15)+o(\log (n+3k+3\lceil \log k\rceil +15))=8k\log n +o(\log n)$. The length of~$R''(\boldc)$ is~$N_2=2k(k+1)(N_1/\lceil \log n\rceil )\log \log n=o(\log n)$. 
Therefore, the redundancy of~$\mathcal{E}(\boldc)$ is~$N_1+N_2=8k\log n+o(\log n)$. 
Let~$N=n+8k\log n +o(\log n)$ be the length of~$\mathcal{E}(\boldc)$. To show that~$\boldc$ can be recovered from a length~$N-k$ subsequence~$\boldd$ of~$\mathcal{E}(\boldc)$, we prove that
\begin{enumerate}
    \item The redundancy~$R'(\boldc)$ can be recovered from~$k$ deletions with the help of~$R''(\boldc)$.
    \item The sequence~$T(\boldc)$ can be recovered from~$k$ deletions with the help of~$R'(\boldc)$.
\end{enumerate}
Note that~$(d_{n+N_1+1},\ldots,d_{n+N_1+N_2-k})$ is a length~$N_2-k$ subsequence of~$R''(\boldc)$. Since~$R''(\boldc)$ is a~$k$ deletion code protecting~$H(R'(\boldc))$, the hash function $H(R'(\boldc))$ can be recovered from~$(d_{n+N_1+1},\ldots,d_{n+N_1+N_2-k})$.
Moreover,~$(d_{n+1},\ldots,$ $d_{n+N_1-k})$ is a length~$N_1-k$ subsequence of~$R'(\boldc)$. From Lemma~\ref{lemma:hash} the hash function~$H(R'(\boldc))$ protects~$R'(\boldc)$ from~$k$ deletions. Hence~$(1)$ holds. 

We now prove~$(2)$. According to Lemma~\ref{lemma:protectingpattern}, the \emph{synchronization vector}~$\1_{sync}(T(\boldc))$ can be recovered from~$M(f(\1_{sync}(T(\boldc)))) $ $\bmod p(T(\boldc))$ and~$p(T(\boldc))$. Since by Lemma~\ref{lemma:transformation},~$T(\boldc)$ is a~\emph{$k$ dense} sequence. Hence from Lemma~\ref{lemma:recoverforkdense}, it can be recovered based on~$\1_{sync}(T(\boldc))$ and~$Hash_k(T(\boldc))$. Finally, the sequence~$\boldc$
can be recovered from~$T(\boldc)$ by Lemma~\ref{lemma:transformation}. Hence~$(2)$ holds and~$\boldc$ can be recovered.

The encoding complexity of~$\mathcal{E}(\boldc)$ is~$O(n^{2k+1})$, which comes from brute fore searching for ~$p(T(\boldc))$. The decoding compliexity is~$O(n^k+1)$, which comes from brute force searching for the correct~$\1_{sync}(T(\boldc))$, given~$M(f(\1_{sync}(T(\boldc)))) \bmod p(T(\boldc))$ and~$p(T(\boldc))$.

\section{Conclusion and Future Work}\label{section:conclusion}
We construct a~$k$-deletion correcting code with optimal order redundancy.  
Interesting open problems include finding complexity~$O(N^{O(1)})$ encoding/decoding algorithms for our code, as well as constructing a systematic $k$-deletion code with optimal redundancy.

\bibliographystyle{IEEEtran}

\end{document}